\theoremstyle{plain}
\newtheorem{theorem}{Theorem}
\newtheorem{definition}{Definition}
\newtheorem{lemma}{Lemma}
\newtheorem{corollary}{Corollary}
\theoremstyle{remark}
\newtheorem{remark}{Remark}
\newcommand{\defeq}{\stackrel{\mathsmaller{\mathsf{def}}}{=}}
\newcommand{\E}{\text{E}}
\def\poly{\operatorname{poly}}
\def\Reviewer: #1\par{\clearpage\section*{Reviewer #1}}
\def\r#1\par{\begin{quote} \textcolor{BrickRed}{\texttt{#1}} \end{quote}}
\newcommand{\onlyShort}[1]{\ifthenelse{\boolean{short}}{#1}{}}
\newcommand{\onlyLong}[1]{\ifthenelse{\boolean{short}}{}{#1}}
\title{Sleeping is Efficient: MIS in $O(1)$-rounds Node-averaged Awake Complexity}
\author{Soumyottam Chatterjee \thanks{Georgetown University, Washington, D.C. Email: \href{mailto: sc1943@georgetown.edu}{\texttt{sc1943@georgetown.edu}}. Research supported in part by NSF award CCF-1733842} \and Robert Gmyr \thanks{Microsoft, Redmond, Washington. Email: \href{mailto: robert.gmyr@microsoft.com}{\texttt{robert.gmyr@microsoft.com}}. Work done while at University of Houston.} \and Gopal Pandurangan \thanks{University of Houston, Houston, Texas. Email: \href{mailto: gopal@cs.uh.edu}{\texttt{gopal@cs.uh.edu}}. Research supported in part by NSF grants IIS-1633720, CCF-1540512, and CCF-1717075, and by BSF grant 2016419.}}
\begin{document}

\maketitle

\thispagestyle{empty}

\begin{abstract}

Maximal Independent Set (MIS) is one of the fundamental problems in distributed computing. The round (time) complexity of distributed MIS has traditionally focused on the \emph{worst-case 
time} for all nodes to finish. The best-known (randomized) MIS algorithms take $O(\log{n})$ worst-case rounds on general graphs (where $n$ is the number of nodes). Breaking the $O(\log{n})$ worst-case bound has been a longstanding open problem, while currently the best-known lower bound is $\Omega(\sqrt{\frac{\log{n}}{\log{\log{n}}}})$ rounds.

Motivated by the goal to reduce \emph{total} energy consumption in energy-constrained networks such as sensor and ad hoc wireless networks, we take an alternative approach to measuring performance. We focus on minimizing the total (or equivalently, the \emph{average}) time for all nodes to finish. It is not clear whether the currently best-known  algorithms yield constant-round (or even $o(\log{n})$) node-averaged round complexity for MIS in general graphs. We posit the \emph{sleeping model}, a generalization of the traditional model, that allows nodes to enter either ``sleep'' or ``waking'' states at any round. While waking state corresponds to the default state in the traditional model, in sleeping state a node  is ``offline'', i.e., it does not send or receive messages (and messages sent to it are dropped as well) and does not incur any time, communication, or local computation cost.  Hence, in this model, only rounds in which a node is awake are counted and we are interested in minimizing the average as well as the worst-case number of rounds a node spends in the awake state, besides the traditional worst-case round complexity (i.e., the rounds for all nodes to finish including both the awake and sleeping rounds).

Our main result is that we show that {\em MIS can be solved in (expected) $O(1)$ rounds under node-averaged awake complexity measure} in the sleeping model. In particular, we present a randomized distributed  algorithm for MIS that has expected {\em $O(1)$-rounds node-averaged awake complexity} and, with high probability has {\em $O(\log{n})$-rounds worst-case awake complexity} and {\em $O(\log^{3.41}n)$-rounds worst-case complexity}.

Our work is a step towards understanding the node-averaged complexity of MIS both in the traditional and sleeping models. It is also the first step in designing energy-efficient distributed algorithms for energy-constrained networks.\\

\textbf{Keywords:}  Maximal Independent Set, sleeping model, node-averaged round complexity, awake complexity, resource-efficient algorithm, MIS, energy-efficient algorithm, energy-efficiency.

\end{abstract}

\newpage

\setcounter{page}{1}


\section{Introduction} \label{sec:intro}

Computing a \emph{maximal independent set} or \emph{MIS} problem in a network is one of the central problems in distributed computing. About 35 years ago, Alon, Babai, and Itai \cite{Alon_1986} and Luby \cite{Luby_1986} presented a randomized distributed algorithm for MIS, running on $n$-node graphs in $O(\log{n})$ rounds with high probability.\footnote{Throughout, we use ``with high probability (whp)'' to mean with probability at least $1 - n^{-\gamma}$, for some constant $\gamma > 1$.} Since then the MIS problem has been studied extensively, and recently there has been some exciting progress in designing faster distributed MIS algorithms. For $n$-node graphs with maximum degree $\Delta$, Ghaffari \cite{Ghaffari_2016_SODA} presented a randomized MIS algorithm running in \onlyShort{$O(\log{\Delta}) + 2^{O(\sqrt{\log{\log{n}}})}$ rounds,}
\onlyLong
{
    \begin{center}
        $O(\log \Delta) + 2^{O(\sqrt{\log\log n})}$ rounds,
    \end{center}    
}
improving over the algorithm of Barenboim et al.\ \cite{Barenboim_2016} that runs in \onlyShort{$O(\log^2 \Delta) + 2^{O(\sqrt{\log\log n})}$ rounds.}
\onlyLong
{
    \begin{center}
        $O(\log^2 \Delta) + 2^{O(\sqrt{\log\log n})}$ rounds.
    \end{center}
}
It was further improved by Rozhon and Ghaffari to \onlyShort{$O(\log{\Delta}  +  \poly{(\log{\log{n}})})$}
\onlyLong
{
    \begin{center}
        $O(\log{\Delta}  +  \poly{(\log{\log{n}})})$
    \end{center}
}
rounds \cite[Corollary $3.2$]{Rozhon_2020}.

While the above results constitute a significant improvement in our understanding of the round complexity of the MIS problem, it should be noted that in general graphs, the best-known running time is still $O(\log{n})$ (even for randomized algorithms). Furthermore, there is a lower bound of \onlyShort{$\Omega(\min\{\frac{\log \Delta}{\log \log \Delta}, \sqrt{\frac{\log n}{\log \log n}}\})$ due to Kuhn et al.\ \cite{Kuhn_2016} that also applies to randomized algorithms. Thus, for example, say, when $\Delta = \Theta(2^{\sqrt{\log{n}}})$, it follows that one cannot hope for algorithms faster than  $\sqrt{\frac{\log n}{\log \log n}}$ rounds. Balliu et al.\ showed recently that one cannot hope for algorithms that run within $o(\Delta) + O(\log^*{n})$ rounds for the regimes where $\Delta << \log \log n$ (for randomized algorithms) \cite[Corollary $5$]{Balliu_2019} and $\Delta << \log n$ (for deterministic algorithms) \cite[Corollary $6$]{Balliu_2019}.
}

\onlyLong
{
    \begin{center}
        $\Omega(\min\{\frac{\log \Delta}{\log \log \Delta}, \sqrt{\frac{\log n}{\log \log n}}\})$
    \end{center}

    due to Kuhn et al.\ \cite{Kuhn_2016} that also applies to randomized algorithms. Thus, for example, say, when $\Delta = \omega(2^{\sqrt{\log n}})$, it follows that one cannot hope for algorithms faster than  $\sqrt{\frac{\log n}{\log \log n}}$ rounds. Balliu et al.\ showed recently that one cannot hope for algorithms that run within $o(\Delta) + O(\log^*n)$ rounds for the regimes where $\Delta << \log \log n$ (for randomized algorithms) \cite[Corollary $5$]{Balliu_2019} and $\Delta << \log n$ (for deterministic algorithms) \cite[Corollary $6$]{Balliu_2019}.
}
\subsection{Energy considerations, Sleeping model, and Node-averaged round complexity} \label{sec:sleeping}

It is important to note that all prior works on MIS, including the ones mentioned above, are focused on measuring the \emph{worst-case} number of rounds for nodes to finish. In other words, the time complexity is measured as the time (number of rounds) needed for the last (slowest) node(s) to finish. As mentioned above, the best-known bound for this measure is still $O(\log{n})$ for general graphs (even for randomized algorithms). In this paper, we take an alternative approach to designing MIS algorithms motivated by two main considerations.

The first consideration is the motivation of designing {\em energy-efficient} algorithms for ad hoc wireless and sensor networks. In such networks, a node's energy consumption depends on the amount of time it is actively communicating with nodes; more importantly, significant energy is spent by a node even when it is just {\em idle}, i.e., waiting to hear from a neighbor. Experimental results show that the energy consumption  in an idle state is only slightly smaller than that in a transmitting or receiving state \cite{Zheng_2005, Feeney_2001}. Thus, even though there might be no messages exchanged between a node and its sender, a node might be spending quite a bit of energy if it is just waiting to receive a message.

On the other hand,  the energy consumption in the ``sleeping'' state, i.e., when it has switched off its communication devices and is not sending, receiving or listening, is significantly less than in the transmitting/receiving/idle (listening) state (see e.g., \cite{Zheng_2005, Feeney_2001, King_2011, Wang_2006, Yang_2013}). A node may cleverly enter and exit sleeping mode to save energy during the course of an algorithm. In fact, this has been exploited by  protocols to save power in ad hoc wireless networks by judiciously switching between two states --- \emph{sleeping} and \emph{awake} --- as needed (the MAC layer provides support for switching between states \cite{Zheng_2005, Yang_2013, Murthy_2004_Book}).

The second consideration, motivated by the first, is saving the {\em total} amount of energy spent by the nodes during the course of an algorithm. Note that in sleeping mode, we assume that there is no energy spent. Thus the total energy is measured as proportional to the {\em total time (number of rounds)} that nodes have spent in the ``awake'' or ``normal'' mode (i.e., non-sleeping mode). In this paper, we thus focus on minimizing the total number of rounds --- or equivalently the {\em average} number of rounds --- spent by all nodes in their {\em awake state} during an algorithm. Our goal is to design distributed algorithms with low {\em node-averaged awake complexity} (see Section \ref{sec:model}).

Motivated by the above considerations, we posit the {\em sleeping model} for distributed algorithms which is a generalization of the traditional model (a more detailed description is given in Section \ref{sec:model}). In the sleeping model, a node can be in either of the two states --- {\em sleeping} or {\em awake (or normal)}. While in the traditional model nodes are only in the awake state, in the sleeping model nodes have the {\em option} of entering sleeping state at any round as well as exiting the sleeping state and entering the awake state at a later round. In the sleeping state, a node does not send or receive messages and messages sent to it by other nodes are lost; it also does not do any local computation. If a node enters a sleeping state, then it is assumed that it does not incur any time or message cost (or other resource costs, such as energy). Some previous models (see e.g., \cite{Khan_2009} and the references therein) assumed that nodes incur little or no energy only when they are not sending/receiving messages; however, this is not true in real-world ad hoc wireless and sensor networks, where considerable energy is spent even when nodes are ``idle'' or ``listening'' for messages. The sleeping model is more realistic, since in the sleeping state nodes turn off their communication (e.g., wireless) devices fully. However, it becomes more challenging to design efficient algorithms under this model.
\subsection{Model and Complexity Measures} \label{sec:model}

Before we define the sleeping model, we will recall the traditional model used in distributed algorithms.


\paragraph{Traditional Model.}

We consider the standard synchronous \textsf{Congest} model \cite{Peleg_2000_Book}, where nodes are always ``awake'' from the start of the algorithm (i.e., round zero). We are given a distributed network of $n$ nodes, modeled as an undirected graph $G$. Each node hosts a processor with limited initial knowledge. We assume that nodes have unique \texttt{ID}s,\footnote{Making this assumption is not essential, but it simplifies presentation.} and at the beginning of the computation each node is provided its \texttt{ID} as input. We assume that each node has ports (each port having a unique port number); each incident edge is connected to one distinct port. We also assume that nodes know $n$, the number of nodes in the network. Thus, a node has only {\em local} knowledge.

Nodes are allowed to communicate through the edges of the graph $G$ and it is assumed that communication is synchronous and occurs in rounds. In particular, we assume that each node knows the current round number (starting from round 0). In each round, each node can perform some local computation (which finishes in the same round) including accessing a private source of randomness, and can exchange (possibly distinct) $O(\log{n})$-bit messages with each of its neighboring nodes.

This model of distributed computation is called the \textsf{Congest}$(\log{n})$ model or simply the \textsf{Congest} model \cite{Peleg_2000_Book}. We note that our algorithms also, obviously apply to the \textsf{Local} model, another well-studied model \cite{Peleg_2000_Book} where there is no restriction on the size of the messages sent per edge per round. \onlyLong{ The \textsf{Local} (resp.\ \textsf{Congest}) model does not put any constraint on the computational power of the nodes, but we do not abuse this aspect: our algorithms perform only light-weight computations.}

Some distributed MIS algorithms (e.g., \cite{Panconesi_1992, Ghaffari_2016_SODA, Barenboim_2016}) assume the \textsf{Local} model, whereas others work also in the \textsf{Congest} model \cite{Luby_1986, Pai_2017, Ghaffari_2019_SODA}.


\paragraph{Sleeping Model.}

We augment the traditional \textsf{Congest} (or \textsf{Local}) model by allowing nodes to enter a {\em sleeping state} at any round. In the sleeping model, a node can be in either of the two states before it finishes executing the algorithm (locally) --- in other words, before it enters a final ``termination'' state. That is, any node $v$, can decide to {\em sleep} starting at any (specified) round of its choice; we assume all nodes know the correct round number whenever they are awake.\footnote{\label{ft:adhoc} One way to implement the sleeping model is to assume that a node's local (synchronized) clock is always running; a node before it enters the sleeping state, sets an ``interrupt'' (alarm) to wake at a specified later round. In practice, the IEEE 802.11 MAC provides low-level support for power management and synchronizing nodes to wake up for data delivery \cite{Zheng_2005, Murthy_2004_Book, Yang_2013}.} It can {\em wake up} again later at any specified  round --- this is the {\em awake} state. In the sleeping state, a node can be considered ``dead'' so to speak: it does not send or receive messages, nor it does any local computation. Messages sent to it by other nodes when it was sleeping are lost. However, a node can awake itself at any specified later round. Note that in the traditional model, which can be considered as a special case of the sleeping model, nodes are always in the awake state. In the sleeping model, a node can potentially conserve its resources by judiciously determining if, when, and how long to sleep.


\paragraph{Node-averaged Round Complexity.}

For a distributed algorithm  on a network $G = (V, E)$ in the sleeping model, we are primarily interested in the ``node-averaged awake round complexity'' or simply ``node-averaged awake complexity''. For a deterministic algorithm, for a node $v$, let $a_v$ be the number of ``awake'' rounds needed for $v$ to finish, i.e., $a_v$ {\em only counts the number of rounds in the awake state} of $v$. Then we define the {\em node-averaged awake complexity} to be $\frac{1}{n}\sum_{v \in V} a_v$.

For a randomized algorithm, for a node $v$, let $A_v$ be the random variable denoting the number of awake rounds needed for $v$ to finish. Then let the random variable $A$ be defined as $\frac{1}{n}\sum_{v \in V} A_v$, i.e., the average of the $A_v$ random variables. Then the expected {\em node-averaged awake complexity} of the randomized algorithm is
\begin{center}
    $\E[A] =  \E[\frac{1}{n}\sum_{v \in V} A_v] = \frac{1}{n}\sum_{v \in V} \E[A_v]$.
\end{center}
In this paper, we are mainly focused on this measure. However, one can also study other properties of $A$, e.g., high probability bounds on $A$.

Note that analogous definitions also naturally apply to the {\em node-averaged round complexity}\footnote{Note that henceforth when we don't specify ``awake'' in the complexity measure, it means that we are referring to the traditional model, and if we do, we are referring to the sleeping model.} in the {\em traditional model}, where all rounds are counted (since nodes are always awake).


\paragraph{Worst-case Round Complexity.}

We measure the ``worst-case awake round complexity'' (or simply the ``worst-case awake complexity'') in the sleeping model as the {\em worst-case} number of {\em awake} rounds (from the start) taken by a node to finish the algorithm. That is, if $a_v$ be the number of awake rounds of $v$ before it terminates, then the {\em worst-case awake complexity} is $\max_{v \in V} a_v$.

While our goal is to design distributed algorithms that are efficient with respect to \emph{node-averaged awake complexity}, we would also like them to be efficient (as much as possible) with respect to the \emph{worst-case awake complexity}, as well as the traditional {\em worst-case round complexity}, where {\em all rounds} (including rounds spent in sleeping state) are  counted.
\subsection{MIS in $O(1)$-rounds node average complexity?} \label{sec:focus}

In light of the difficulty in breaking the $o(\log{n})$-round (traditional) worst-case barrier and the $\Omega(\min\{\frac{\log \Delta}{\log \log \Delta}, \frac{\sqrt{\log n}}{\log \log n}\})$ lower bound for worst-case round complexity, as well as motivated by energy considerations discussed above, a fundamental question that we seek to answer is this: 
\begin{center}
    \emph{Can we design a distributed MIS algorithm that takes $O(1)$-rounds node-averaged awake complexity?} 
\end{center}

Before we answer this question, it is worth studying the node-averaged round complexity of some well-known distributed MIS algorithms in the \emph{traditional model}. It is not clear whether Luby's algorithms (both versions of it \cite{dnabook, Luby_1986}) give $O(1)$-round node-averaged complexity, or even $o(\log{n})$-round node-averaged complexity. The same is the situation with the algorithms of Alon et al.\ \cite{Alon_1986} and Karp et al.\ \cite{Karp_1985} as well as known deterministic MIS algorithms \cite{Awerbuch_1989, Panconesi_1992} (see also \cite{Barenboim_2016, Rozhon_2020}). The algorithm of \cite{Ghaffari_2016_SODA} (also of Barenboim et al.\ \cite{Barenboim_2016}) does not seem to give $O(1)$ (or even $o(\log{n})$) node-averaged complexity. For example, take Ghaffari's algorithm \cite{Ghaffari_2016_SODA}, which is well-suited for analyzing the node-averaged complexity since it is ``node centric'': for any node $v$, it gives a probabilistic bound on when $v$ will finish. More precisely, it shows that for each node $v$, the probability that $v$ has  not finished (i.e., its status has not been determined) after $O(\log{(deg(v))} + \log{(\frac{1}{\epsilon})})$ rounds is at most $\epsilon$. Using this it is easy to compute the (expected) node-averaged complexity of Ghaffari's algorithm. However, this is still only $O(\log{n})$, as $\log{(deg(v))}$ can be $\Theta(\log{n})$ for most nodes.

Recently Barenboim and Tzur \cite{Barenboim_2019} showed that MIS can be solved in $O(a + \log^*{n})$ rounds under {\em node-averaged complexity} deterministically, where $a$ is the arboricity of the graph. It is a open question whether one can design an algorithm with $O(1)$ (or even $o(\log{n})$) node-averaged round complexity in the \emph{traditional model} for general graphs (which can have arboricity as high as $\Theta(n)$). Hence a new approach is needed to show $o(\log{n})$, in particular, $O(1)$ node-averaged round complexity.
\subsection{Our Contributions}

Our main contributions are positing the sleeping model and designing algorithms with {\em constant}-rounds node-averaged awake complexity algorithm for MIS in the model.


\begin{table*}[ht]
\small
\begin{tabular}{|c|c|c|c|}
\hline

\multirow{2}{*}{}                                                        & \multirow{2}{*}{\begin{tabular}[c]{@{}c@{}}Prior MIS algorithms\\ (e.g., Luby's \cite{Luby_1986, Alon_1986},\\ CRT \cite{Coppersmith_1989, Blelloch_2012, Fischer_2018}, etc.)\end{tabular}} & \multicolumn{2}{c|}{Our algorithms}                                                                                                                                                     \\ \cline{3-4} 
                                                                         &                                                                                                                                   & \begin{tabular}[c]{@{}c@{}}Algorithm \ref{alg:sleepingMIS}\\(\textsc{SleepingMIS})\end{tabular} & \begin{tabular}[c]{@{}c@{}}Algorithm \ref{algorithm-fast-sleeping}\\(\textsc{Fast-SleepingMIS})\end{tabular} \\ \hline
\begin{tabular}[c]{@{}c@{}}Node-averaged\\ awake complexity\end{tabular} & Not applicable.                                                                                                                   & $O(1)$                                                                                & $O(1)$                                                                                          \\ \hline
\begin{tabular}[c]{@{}c@{}}Worst-case\\ awake complexity\end{tabular}    & Not applicable.                                                                                                                   & $O(\log{n})$                                                                          & $O(\log{n})$                                                                                    \\ \hline
\begin{tabular}[c]{@{}c@{}}Worst-case\\ round complexity\end{tabular}    & $O(\log{n})$                                                                                                                      & $O(n^3)$                                                                              & $O(\log^{3.41}{n})$                                                                             \\ \hline
\begin{tabular}[c]{@{}c@{}}Node-averaged\\ round complexity\end{tabular} & \begin{tabular}[c]{@{}c@{}}The best known\\ bounds are $O(\log{n})$.\end{tabular}                                                 & $O(n^3)$                                                                              & $O(\log^{3.41}{n})$                                                                             \\ \hline

\end{tabular}
\vspace{5 mm}
\caption{A summary of the different complexity measures for the various distributed MIS algorithms}
\label{table-list-of-complexity-measures-for-the-different-algorithms}
\end{table*}


Our main result is a randomized distributed MIS algorithm in the {\em sleeping model} whose {\em (expected) node-averaged awake complexity is $O(1)$.} In particular, we present a randomized distributed algorithm (Algorithm \ref{algorithm-fast-sleeping}) that has $O(1)$-rounds expected node-averaged awake complexity and, with high probability, has $O(\log{n})$-round worst-case awake complexity, and $O(\log^{3.41}n)$ worst-case (traditional) round complexity (cf.\ Theorem \ref{theorem-all-the-complexity-measures-for-the-second-algorithm}). We refer to Table \ref{table-list-of-complexity-measures-for-the-different-algorithms} for a comparison of the results. Please also see Theorem \ref{theorem-all-the-complexity-measures-for-the-first-algorithm} and Theorem \ref{theorem-all-the-complexity-measures-for-the-second-algorithm}, respectively.

Our work is\onlyLong{ also} a step towards understanding whether a $O(1)$-round node-averaged algorithm is possible in the traditional model (without sleeping).
\subsection{Comparison with Related Work} \label{sec:related}

\onlyLong
{
    Much of the research in design and analysis of efficient distributed algorithms and proving lower bounds of such algorithms in the last three decades have focused on the worst-case round complexity. Recently, there have been a few works that have focused on studying various fundamental distributed computing problems under a node-average round complexity (in the traditional model).
}

The notion of node-averaged (or vertex-averaged) complexity for the {\em traditional model} was proposed by Feuilloley \cite{Feuilloley_2020} and further studied (with slight modifications) in Barenboim and Tzur \cite{Barenboim_2019}. The motivation for node-averaged complexity --- which also applies here --- is that it can better capture the performance of distributed algorithms vis-a-vis the resources expended by individual nodes \cite{Barenboim_2019}.

In Feuilloley's notion \cite{Feuilloley_2020}, a node's running time is counted only till it outputs (or commits its output); the node may still participate later  (e.g., can forward messages etc.) but the time after it decides its output is {\em not counted}. In other words, the node-averaged complexity is the average of the runtimes of the nodes, where a node's runtime is till it decides its output (though it may not have terminated). The average running time is the average of the running times under the above notion. This work \cite{Feuilloley_2020} studies the average time complexity of leader election and coloring algorithms on cycles and other specific sparse graphs. \onlyLong{For leader election, the paper shows gives an algorithm with $O(\log{n})$ node-average complexity (we note that the worst-case time complexity has a lower bound of $\Omega(n)$, even for randomized algorithms \cite{Kutten_2015_JACM}). For 3-coloring a cycle, the paper \cite{Feuilloley_2020} shows that the node-averaged complexity cannot be improved over the worst-case, i.e., it is $\Theta(\log^*{n})$.}

Following the work of Feuilloley \cite{Feuilloley_2020}, Barenboim and Tzur \cite{Barenboim_2019} address several fundamental problems under node-averaged round complexity. Their notion is somewhat different from that of Feuilloley --- in \cite{Barenboim_2019}, as soon as a node decides its output, it sends its output to its neighbors and  terminates (does not take any further part in the algorithm). This is arguably a more suitable version for real-world networks in light of what was discussed in the context of saving energy and other node resources. Our notion is the same as that of Barenboim and Tzur, but extended to apply to the more general sleeping model where time spent by nodes in the sleeping state (if any) is not counted.

\onlyLong
{
    Barenboim and Tzur show a number of results for node-coloring as well as for MIS, $(2\Delta-1)$-edge-coloring and maximal matching. Their bounds apply to general graphs, but depend on the arboricity of the graph. In particular, for MIS, $(2\Delta-1)$-edge-coloring and maximal matching, they show a deterministic algorithm with $O(a+\log^* n)$ node-average complexity, where $a$ is the arboricity of the graph (which can be $\Theta(n)$ in general).

    It is still not known whether one can obtain $O(1)$ (or even $o(\log{n})$) round node-averaged complexity for MIS in general graphs in the traditional model (without sleeping). In this paper, we answer this question in the affirmative in the sleeping model. Note that $(\Delta+1)$-coloring can be solved in $O(1)$ round node-averaged complexity in general graphs by using Luby's $(\Delta+1)$-coloring algorithm \cite{Luby_1993}, e.g., see the paper of Barenboim and Tzur \cite[Section $6.2$]{Barenboim_2019}; however, this does not imply any such bound for MIS. 
}

There is also another important distinction between the algorithms of this paper and those of two works discussed above of \cite{Feuilloley_2020, Barenboim_2019}: some of the algorithms in the above works assume the \textsf{Local} model (unbounded messages), whereas \textsf{Congest} model (small-sized messages) is assumed here.

\onlyLong
{
    Finally, we point out that in a {\em dynamic} network model the work of \cite{Censor_2016_PODC} analyzed the average time complexity of algorithms using amortized analysis. This model is dynamic where nodes and edges may be added or deleted from the graph and is different compared to the static setting studied in this paper which is the case with almost all prior works on MIS mentioned here.
}

The work of King et al.\ \cite{King_2011} uses a sleeping model similar to ours --- nodes can be in two states sleeping or awake (listening and/or sending) --- but their setting is different. They present an algorithm in this model to solve a reliable broadcast problem in an energy-efficient way where nodes are awake only a fraction of the time. Another different model studied in the literature for problems such as MIS is the \emph{beeping model} (see, e.g., \cite{Afek_2013}) where nodes can communicate (broadcast) to their neighbors by either beeping or not. Sleeping  is orthogonal to beeping and one can study a model that uses both.

\section{Challenges and High-level overview} \label{sec:idea}

Before we go to an overview of our algorithm, we discuss some of the challenges in obtaining an $O(1)$-round node-averaged complexity in the traditional model. As mentioned in Section \ref{sec:focus}, either prior distributed MIS algorithms have  $\Theta(\log{n})$ node-average complexity or it is not clear if their node-averaged complexity is (even) $o(\log n)$ for arbitrary graphs. A straightforward  way to show constant node-averaged round complexity is to argue that a constant fraction (on expectation, for randomized algorithms) of the nodes finish in every round; this can be shown to imply $O(1)$ node-averaged complexity. Indeed, this is the reason why Luby's (randomized) algorithm for $(\Delta+1$)-coloring \cite{Luby_1993} gives $O(1)$ node-averaged complexity (see \cite[Section $6.2$]{Barenboim_2019}). It is not clear whether this kind of property can be shown for existing MIS algorithms (see Section \ref{sec:focus}).

\onlyLong{Our main contribution is to show how one can design an algorithm with $O(1)$-rounds node-averaged {\em awake} complexity for MIS in the sleeping model while still having small worst-case running times (both in the sleeping and traditional models). This is non-trivial since it is not obvious how to take advantage of the sleeping model properly. }A key difficulty in showing an $o(\log{n})$ node-averaged awake complexity for the MIS problem in the sleeping model is that messages sent to a sleeping node are simply ignored; they are not received at a later point. Hence a sleeping node is  unable to know the status of its neighbors (even after it is awake, since the neighbors could be then sleeping or even finished). Another difficulty is that it is not clear when to wake up a sleeping node; and when it wakes up, its neighbors might be sleeping and it won't know their status. It could be costly (in terms of node-averaged awake complexity) to keep all of its neighbors awake for many rounds. Hence a new approach is needed to get constant node-averaged awake complexity.


\onlyLong{

\begin{figure*}[t]
\begin{center}


\begin{tikzpicture}[level/.style = {sibling distance = 72 mm / #1}]

\node[circle, draw](z){$1$, $29$}
	child {node [circle,draw] (a) {$2$, $14$}
    child {node [circle,draw] (b) {$3$, $7$}
        child {node [circle,draw] (d) {$4$, $4$}}
        child {node [circle,draw] (e) {$6$, $6$}}
      } 
    child {node [circle,draw] (g) {$9$, $13$}
    child {node [circle,draw] (d) {$10$, $10$}}
        child {node [circle,draw] (e) {$12$, $12$}}
    }
  }
  child {node [circle,draw] (j) {$16$, $28$}
    child {node [circle,draw] (k) {$17$, $21$}
    child {node [circle,draw] (d) {$18$, $18$}}
        child {node [circle,draw] (e) {$20$, $20$}}
    }
  child {node [circle,draw] (l) {$23$, $27$}
      child {node [circle,draw] (o) {$24$, $24$}}
      child {node [circle,draw] (p) {$26$, $26$}
    }
  }
};
\node[](labelz)[right = of z,xshift=-8mm]{\textbf{\large $G = A \cup S$, $M(G) = M(A) \cup M(B)$}};
\node[](labela)[above left = of a,xshift=8mm,yshift=-15mm]{\textbf{\large $A$}};
\node[](labelj)[above right = of j,xshift=-8mm,yshift=-15mm]{\textbf{\large $S \supset B$}};

\end{tikzpicture}


\end{center}
\vspace{5 mm}
\caption{\large \boldmath A sample recursion tree consisting of four levels; each tree vertex is labeled with two numbers --- the first of which denotes the time when the vertex is reached for the first time, while the second number denotes the time when computation finishes at that vertex.}
\label{figure-recursion-tree-1}
\end{figure*}
}


The high-level idea of our (randomized) algorithm is quite simple and can be explained by a simple recursive procedure\onlyLong{ (see Figure \ref{figure-recursion-tree-1})}. Consider a graph $G = (V, E)$. Every node flips a fair coin. If the coin comes up heads, the node falls asleep. Otherwise, it stays awake. Let $S$ be the set of sleeping nodes and $A$ be the set of awake nodes. The procedure is invoked recursively on the subgraph $G[A]$ induced by $A$ to compute an MIS $M(A)$ of that subgraph. The recursion bottoms out when the procedure is invoked on an empty subgraph or a subgraph containing only a single node. In the latter case, the node joins the MIS.

Once $M(A)$ is determined, the nodes in $S$ wake up at an appropriate time --- {\em that is synchronized to the time when the recursive call on $G[A]$ returns} -- and every node in $M(A)$ informs its neighbors that its in the MIS. At this point, the status (i.e., whether a node is in the MIS or not) of the nodes in $M(A)$ and their neighbors (including those in $S$) is fixed, so all of these nodes terminate.

It remains to fix the status of the remaining nodes, which form a subset $B \subseteq S$. To do so, we recursively invoke the procedure on $G[B]$, which gives us an MIS $M(B)$ of that subgraph. The overall MIS is then given by $M(A) \cup M(B)$. The recursion bottoms out when the procedure is invoked on an empty subgraph or a subgraph containing only a single node. In the latter case, the node joins the MIS.

The main observation for the analysis of this procedure is the following. By definition, $M(A)$ dominates $A$ and therefore all nodes in $A$ terminate after the first recursive call is finished. On top of that, the nodes in $M(A)$ might also dominate some of the nodes in $S$. In fact, one can show that on {\em on expectation at least a $1/4$-fraction} of the nodes in $S$ have a neighbor in $M(A)$ and hence will be {\em eliminated} when the recursive call finishes. This is shown in the key technical lemma called the {\em Pruning Lemma} (see Lemma \ref{lem:right}).

A main challenge in proving Lemma \ref{lem:right} is that $A$ is fixed by sampling (coin tosses) and the MIS of $A$ does not depend on $S$, the set of sleeping nodes. Yet we would like to show that a constant fraction of nodes in $S$ have a neighbor in the MIS of $A$, i.e., $M(A)$. Note that given $A$, $M(A)$ can possibly be such that the number of neighbors in $S$ can be very small; this will not eliminate many nodes in $S$. We avoid this by {\em coupling the process of sampling  with the process of finding an MIS} and show that, despite choosing $A$ first (by random sampling), the MIS computed on $A$ will eliminate a constant fraction of $S$. However, choosing $A$ first introduces dependencies which makes it non-trivial to prove the Pruning Lemma; we overcome this by using the {\em principle of deferred decisions} (cf.\ proof of Lemma \ref{lem:right}).

The Pruning Lemma (see Lemma \ref{lem:right}) guarantees that a constant fraction of nodes in the graph terminate {\em without being included} in either of the two recursive calls. So the status of these nodes is fixed by being awake for only a constant number of rounds (only three rounds). As a consequence, the two recursive calls together only operate on at most $\left(\frac{3}{4}\right)$-fraction of the given nodes on expectation. This saving propagates down the tree of recursive calls such that at level~$i$ of the tree the overall number of nodes on which the calls at that level operate is at most $\left(\frac{3}{4}\right)^i \cdot n$. It is not hard to see that the number of rounds per vertex required by the procedure outside of the recursive calls is constant. Therefore, the overall expected vertex-average complexity of the procedure in the sleeping model is: $\frac{1}{n} \sum_{i = 0}^\infty \left(\frac{3}{4} \right)^i n  =  O(1)$.

The above algorithm (cf.\ Section \ref{sec:alg}) has {\em constant} round node-averaged complexity and $O(\log{n})$-rounds worst-case awake complexity, but it has polynomial worst-case complexity. We then show that our MIS algorithm can be combined with a variant of Luby's algorithm so that the worst-case (traditional) complexity can be improved to polylogarithmic ($O(\log^{3.41}{n})$) rounds, while still having $O(1)$ rounds node-averaged complexity and $O(\log{n})$ worst-case awake complexity.

\section{The Sleeping MIS Algorithm} \label{sec:alg}

We consider the algorithm given in Figure \ref{alg:sleepingMIS}.
To compute an MIS for a given graph $G = (V, E)$, each node $v$ in $G$ calls the function \textsc{SleepingMIS} at the same time. The function \textsc{SleepingMIS} is called with the function parameter  $K = 3\log{n}$, where $n$ is the network size. We show in the analysis (see Lemma \ref{lem:correctness} and Theorem \ref{theorem-all-the-complexity-measures-for-the-first-algorithm}) that our algorithm is correct with high probability.

After initializing some variables, the function calls the recursive function \textsc{SleepingMISRecursive}. This function computes an MIS on the subgraph induced by the set of nodes that call it. In the initial call, all nodes in $G$ call the function. In later calls, however, the function operates on proper subgraphs of $G$. Each (non-trivial) call of \textsc{SleepingMISRecursive} partitions the set of nodes participating in the call into two subsets. The function uses a recursive call to compute an MIS on the subgraph induced by the first set, updates the nodes in the second set about the result of the first recursive call, and finally uses a second recursive call to finish the computation of the MIS. \textsc{SleepingMISRecursive} takes an integer parameter $k$. This parameter starts with $k  =  K  =  \lceil 3\log{n} \rceil$ in the initial call. The function parameter $k$ is then decremented from one level of the recursion to the next until the recursion base with $k = 0$ is reached.


\begin{algorithm}
  \begin{algorithmic}[1]

    \Function{SleepingMIS}{$K$}
    
      \State $v.$inMIS $\gets$ \texttt{unknown}
      \For{$i$ \textbf{in} $1, \dots, K$}
        \State $v.X_i \gets$ random bit that is $1$ with probability $\frac{1}{2}$
      \EndFor
      \State \textsc{SleepingMISRecursive}($K$) \label{line-depth-of-recursion} \Comment{$K$ is the recursion depth.}
    \EndFunction

    \bigskip

    \Function{SleepingMISRecursive}{$k$}
      \If{$k = 0$} \Comment{base case} \label{lin:baseCaseBegin}
        \State $v.$inMIS $\leftarrow$ true
        \State \textbf{return}
      \EndIf \label{lin:baseCaseEnd}

      \medskip
      \State send message to every neighbor \Comment{first isolated node detection, 1 round} \label{lin:isolatedBegin}
      \If{$v$ receives no message}
        \State $v.$inMIS $\leftarrow$ true
      \EndIf \label{lin:isolatedEnd}
       
      \medskip
      \If{$v.$inMIS $=$ unkown \textbf{and} $v.X_k = 1$} \Comment{left recursion} \label{lin:leftRecursionBegin}
        \State \textsc{SleepingMISRecursive}$(k - 1)$ \label{lin:leftRecursiveCall}
      \Else
        \State sleep for $T(k-1) = 3(2^{k-1}-1)$ rounds \Comment{$T(k-1)$ is the duration of the recursive call in Line~\ref{lin:leftRecursiveCall}\onlyLong{; it is computed in Section \ref{subsection-analysis-of-the-worst-case-awake-complexity-of-the-first-algorithm}}.} \label{lin:leftSleep}
      \EndIf \label{lin:leftRecursionEnd}

      \medskip
      \State send value of $v.$inMIS to every neighbor  \Comment{synchronization step,  1 round} \label{lin:eliminationBegin}
      \If{$v.$inMIS $=$ unknown \textbf{and} $v$ receives message from neighbor $w$ with $w$.inMIS $=$ true}
        \State $v.$inMIS $\leftarrow$ false \Comment{elimination}
      \EndIf \label{lin:eliminationEnd}

      \medskip
      \State send value of $v.$inMIS to every neighbor \Comment{second isolated node detection, 1 round} \label{lin:isolated2Begin}
      \If{$v.$inMIS $=$ unknown \textbf{and} $v$ only receives messages from neighbors $w$ with $w.\text{inMIS} = \text{false}$}
        \State $v.$inMIS $\leftarrow$ true
      \EndIf \label{lin:isolated2End}

      \medskip
      \If{$v.$inMIS $=$ unknown} \Comment{right recursion} \label{lin:rightRecursionBegin}
        \State \textsc{SleepingMISRecursive}$(k - 1)$ \label{lin:rightRecursiveCall}
      \Else
        \State sleep for $T(k-1) = 3(2^{k-1}-1)$ rounds \Comment{ $T(k-1)$ is the duration of the  recursive call in Line~\ref{lin:rightRecursiveCall}} 
        \label{lin:rightSleep}
      \EndIf \label{lin:rightRecursionEnd}
    \EndFunction
\end{algorithmic}

\caption{The ``Sleeping MIS algorithm'' executed by a node $v$} \label{alg:sleepingMIS}
\end{algorithm}


During the algorithm, each node $v$ in $G$ stores a variable $v.\text{inMIS}$. Initially, this variable is set to \emph{unknown} to signify that it has not yet been determined whether $v$ is in the MIS or not. Over the course of the algorithm the variable is set to \emph{true} or \emph{false}. Once $v.\text{inMIS}$ has been set to one of these values, it is never changed again. The MIS $M$ computed by the algorithm is given by the set of nodes $v$ with $v.\text{inMIS} = \text{true}$ after termination.

Consider a call of the function \textsc{SleepingMISRecursive} by a node set $U \subseteq V$ and with a parameter $k$. The goal of such a call is to compute an MIS in the induced subgraph $G[U]$. The function consists of six parts as indicated by the comments on the right side in Algorithm~\ref{alg:sleepingMIS}. The first part (Lines \ref{lin:baseCaseBegin} -- \ref{lin:baseCaseEnd}) is the \emph{base case} of the recursion. Once the base case of the recursion is reached it holds, with high probability, that $|U| \leq 1$. Therefore, the function has to compute an MIS on a graph consisting of at most one node. Such a node simply joins the MIS.

The second part (Lines~\ref{lin:isolatedBegin}--\ref{lin:isolatedEnd}) detects isolated nodes and adds them to the MIS. We refer to this part as the \emph{first isolated node detection}. Note that on the top level of the recursion, which operates on the entire graph $G$, the nodes do not have to communicate in order to determine whether a node is isolated or not. On lower levels of the recursion, however, the function generally operates on a proper subgraph $G[U]$ of $G$. The given instructions make sure that a node correctly determines its neighborhood in $G[U]$.

The third part (Lines~\ref{lin:leftRecursionBegin}--\ref{lin:leftRecursionEnd}) uses a recursive call to compute a partial MIS. We call this part the \emph{left recursion} due to the intuition of organizing the recursive calls of the function into a binary tree that is traversed in a left-to-right order (\onlyLong{see Figure \ref{figure-recursion-tree-1}}\onlyShort{See Figure 1 in the full paper in Appendix}). Every non-isolated node $v \in U$ with $v.X_k = 1$ participates in this recursive call, where $v.X_k$ is a variable that is set to a random bit during initialization. The recursive call computes an MIS in the subgraph induced by these nodes. In doing so, it fixes the value of $v.\text{inMIS}$ for every node $v \in U$ with $v.X_k = 1$. The nodes $v \in U$ with $v.X_k = 0$ and all isolated nodes in $U$ sleep for the duration of the left recursive call, see Line~\ref{lin:leftSleep}. Thereby, all nodes in $U$ start executing the next part of the function (which begins in Line~\ref{lin:eliminationBegin}) at the same time.

The purpose of the fourth and fifth part of the function is to update the nodes $v \in U$ with $v.X_k = 0$ about the decisions made in the left recursive call. We call the fourth part (Lines \ref{lin:eliminationBegin}--\ref{lin:eliminationEnd}) the \emph{elimination step}. In this part, every node $v \in U$ with $v.\text{inMIS} = \text{unkown}$ checks whether it has a neighbor in $G[U]$ that is in the MIS. If that is the case, $v$ sets $v.\text{inMIS}$ to false. The fifth part (Lines \ref{lin:isolated2Begin} -- \ref{lin:isolated2End}) is the \emph{second isolated node detection}. In this part, a node $v \in U$ checks whether the variable $w.\text{inMIS}$ is false for every neighbor $w$ of $v$ in $G[U]$. If so, $v$ sets $v.\text{inMIS}$ to true.

The sixth and final part of the function (Lines \ref{lin:rightRecursionBegin}--\ref{lin:rightRecursionEnd}) uses a second recursive call to complete the computation of the MIS in $G[U]$. We refer to this part as the \emph{right recursion}. As in the left recursion, only a subset of the nodes participates in the recursive call while the other nodes sleep. Specifically, every node $v \in U$ for which the value of $v.\text{inMIS}$ is still \emph{unknown} participates in the right recursive call.

One important technical issue is \emph{synchronization}. The sleeping nodes wake up at the appropriately synchronized round to synchronize with the wake up nodes (at the end of their recursive calls --- (Lines  \ref{lin:leftRecursiveCall} and \ref{lin:rightRecursiveCall}) so that messages can be exchanged between neighbors.\onlyLong{ A node that calls \textsc{SleepingMISRecursive}($k$) will wake up at a later round, which is given by the function $T(k)$. $T(k)$ is the ({\em worst-case}) time taken to complete this recursive call. The function $T(k) = 3(2^k - 1)$ as computed in the proof of Lemma \ref{lemma-first-algorithm-worst-case-traditional-round-complexity}.}

\section{Analysis} \label{sec:analysis}

We now turn to the analysis of  Algorithm \ref{alg:sleepingMIS}. \onlyShort{We only state the correctness lemma here and defer the formal proof of correctness to the longer version of the paper in the appendix.

\begin{lemma} \label{lem:correctness}
    It holds with high probability that on an $n$-node graph $G$, the set $M$ computed by the algorithm \textsc{SleepingMIS} is an MIS.
\end{lemma}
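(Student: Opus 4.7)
I would split the argument into two parts: (a) a deterministic induction showing correctness conditional on a ``well-behaved'' base case, and (b) a probabilistic argument that this event holds with high probability, via a per-edge contraction bound combined with a union bound.

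For (a), call an execution \emph{well-behaved} if every invocation of $\textsc{SleepingMISRecursive}(0)$ operates on a set $U$ that is independent in $G$. Under this assumption, I would prove by induction on $k$ that any call $\textsc{SleepingMISRecursive}(k)$ on a participating set $U$ produces an MIS of the induced subgraph $G[U]$. The base case $k=0$ is immediate: every node in $U$ can safely join. For the inductive step, let $A = \{v \in U : v.X_k = 1, v \text{ not isolated in } G[U]\}$ be the set entering the left recursion; by the inductive hypothesis that call returns an MIS $M(A)$ of $G[A]$. The elimination step then sets $v.\text{inMIS} = \text{false}$ for each remaining node adjacent to $M(A)$, the second isolated-node detection promotes any still-unknown node whose neighbors in $G[U]$ are now all false, and the right recursion (by induction) completes the MIS on the remaining unknown subset. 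Independence of the resulting set then follows because (i) the two recursive MISes are independent by induction, (ii) every node adjacent to $M(A)$ is eliminated, (iii) any node entering the right recursion has no neighbor in $M(A)$, and (iv) a node set true by the second isolated-node detection has all neighbors in $G[U]$ already false; maximality follows because every node ends the call with a definite status and the only way to be set false is via the elimination step, which requires a true neighbor.

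For (b), the well-behaved event fails iff some edge $(u,v) \in E$ has both endpoints in the same base-case call. For a fixed edge I would show a per-level contraction: conditional on $u, v$ being together in a call at depth $k$, they remain together in the same call at depth $k-1$ with probability at most $1/2$. The bits $X_k(u), X_k(v)$ are independent $\mathrm{Bernoulli}(1/2)$ and are used at this level \emph{only} to route a non-isolated node into the left recursion (if $X_k = 1$) or into sleep (if $X_k = 0$). So: with probability $1/4$ both have $X_k = 1$ and enter the same left sub-call; with probability $1/2$ the bits disagree and the two nodes end up in different sub-calls; with probability $1/4$ both have $X_k = 0$ and may (at most) both end up in the right recursion. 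Iterating across the $K = \lceil 3 \log n \rceil$ levels yields probability at most $(1/2)^{K} = 1/n^3$ that a given edge is bad, and a union bound over at most $\binom{n}{2}$ edges gives overall failure probability $O(1/n)$, which combined with part (a) yields the claim.

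\textbf{Main obstacle.} The delicate step is the ``both $X_k=0$'' branch of part (b): whether $u$ and $v$ both actually enter the right recursion depends on $M(A)$, which is determined by random coins at strictly lower recursion levels (and also by coins of many other nodes). I would handle this via the principle of deferred decisions --- mirroring the coupling idea used in the Pruning Lemma (Lemma \ref{lem:right}) --- by observing that the event ``both $u$ and $v$ enter the right recursion'' is a sub-event of the purely-local event ``$X_k(u) = X_k(v) = 0$'', whose unconditional probability is exactly $1/4$, regardless of what the left recursion subsequently produces. This decouples the per-level bound from the downstream coin flips and lets the iteration across levels go through cleanly.
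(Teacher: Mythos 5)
Your proposal is correct and follows essentially the same route as the paper's proof: a deterministic induction on $k$ showing that each call of \textsc{SleepingMISRecursive} returns an MIS of its induced subgraph, combined with a union-bound argument that the bits $X_K,\dots,X_1$ separate any two nodes before the base case is reached. The only (immaterial) difference is that you union-bound over edges to conclude each leaf set is independent, whereas the paper union-bounds over all pairs to conclude each leaf set is a singleton; both rest on the same containment you identify in your ``main obstacle'' paragraph, namely that two nodes reaching the same call at depth $k$ must agree on $X_i$ for all $i > k$, giving per-edge failure probability $2^{-K} \le n^{-3}$.
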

}

\onlyLong{\subsection{Correctness} \label{sec:correctness}

\begin{lemma} \label{lem:correctness}
   On an $n$-node graph $G$, the set $M$ computed by the algorithm \textsc{SleepingMIS}(K), with $K = \lceil3\log{n}\rceil$, is an MIS with high probability.
\end{lemma}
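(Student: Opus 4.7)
The plan is to combine a structural induction on the recursion depth with a union-bound argument on the random bits. Define the ``nice event'' $\mathcal{E}$ as: every base-case invocation of \textsc{SleepingMISRecursive}$(0)$ has its participating set $U$ form an independent set in $G$. I would first prove that, conditioned on $\mathcal{E}$, every call \textsc{SleepingMISRecursive}$(k)$ executed by a set $U \subseteq V$ terminates with $\{v \in U : v.\text{inMIS}=\text{true}\}$ being an MIS of $G[U]$. Then I would bound $\Pr[\mathcal{E}]$ using the choice $K = \lceil 3\log n \rceil$. Applying both to the top-level call (with $U=V$) gives the lemma.

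The inductive step is the heart of the correctness argument. I would trace through the six parts of the function at level $k \geq 1$. The first isolated-node detection (Lines~\ref{lin:isolatedBegin}--\ref{lin:isolatedEnd}) correctly adds $G[U]$-isolated nodes to $M$, since nodes outside $U$ are either settled or sleeping and therefore send no message. The left recursion operates on $A = \{v \in U : v.X_k = 1, v.\text{inMIS} = \text{unknown}\}$ and, by induction, returns an MIS $M_A$ of $G[A]$. The elimination step (Lines~\ref{lin:eliminationBegin}--\ref{lin:eliminationEnd}) then sets $v.\text{inMIS} = \text{false}$ for precisely those $v \in U \setminus A$ having a neighbor in $M_A$. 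The second isolated-node detection (Lines~\ref{lin:isolated2Begin}--\ref{lin:isolated2End}) adds $v \in U \setminus A$ to $M$ iff every $G[U]$-neighbor of $v$ has $\text{inMIS}=\text{false}$, which is equivalent to $v$ having no neighbor in the leftover set $B$. Finally, the right recursion returns, by induction, an MIS $M_B$ of $G[B]$. To close the induction I would verify: (a) \emph{independence}, since $M_A$ and $M_B$ are independent in their subgraphs, any would-be cross edge would contradict the elimination step, and second-isolated nodes are adjacent neither to $M_A$ (the offending neighbor would have broadcast ``true'') nor to $M_B$ (the offending neighbor would still have been ``unknown''); and (b) \emph{maximality}, since un-chosen nodes in $A$ have a neighbor in $M_A$ by induction, eliminated nodes have a neighbor in $M_A$, and un-chosen nodes in $B$ have a neighbor in $M_B$ by induction.

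The probabilistic step is clean. A node $v$ that has not yet been settled takes the ``left'' branch at level $k$ iff $v.X_k = 1$, so for two adjacent nodes $u, v$ to both land in the same base-case participating set, they must agree on every branch choice throughout the recursion; equivalently, $u.X_k = v.X_k$ for all $k \in \{1, \ldots, K\}$. Since the $X_k$ are independent fair coins, this occurs with probability $2^{-K} \leq n^{-3}$, and a union bound over the at most $\binom{n}{2}$ adjacent pairs yields $\Pr[\neg \mathcal{E}] \leq 1/(2n)$. Hence, with high probability, every base case respects $\mathcal{E}$ and the output is an MIS. The main obstacle I anticipate is the bookkeeping in the inductive step: specifically, confirming that the second isolated-node detection operates on \emph{authoritative} true/false values broadcast after the left recursion (so that no stale ``unknown'' status causes a wrongful addition), and that the synchronized sleep durations $T(k-1)$ guarantee the messages exchanged in Lines~\ref{lin:eliminationBegin} and \ref{lin:isolated2Begin} are actually received by the intended awake neighbors.
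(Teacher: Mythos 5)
Your proposal is correct and follows essentially the same route as the paper's proof: a deterministic induction on the recursion depth (tracing the isolated-node detections, the left recursion on $A$, elimination, and the right recursion) combined with a union bound showing that, with $K = \lceil 3\log n\rceil$, the random bits $X_1,\dots,X_K$ of two nodes reaching the same base case must all agree, an event of probability at most $n^{-3}$ per pair (the paper proves the slightly stronger statement that each base-case set is a singleton with high probability, but your weaker independent-set event suffices). The one item you defer as ``bookkeeping''---synchronization---is exactly what the paper folds into its induction hypothesis as the explicit condition that all participants of a call return in the same round, which is what guarantees the messages in the elimination and second isolated-node detection steps are exchanged between simultaneously awake neighbors.
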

\begin{proof}
    We show the statement by induction. We use the following induction hypothesis: If all nodes in a set $U \subseteq V$ simultaneously call the function \textsc{SleepingMIS} with parameter $k$ then
    \begin{enumerate}
        \item all nodes in $U$ return from the call in the same round,
        \item after the call, the variable inMIS is true or false for all nodes in $U$, and
        \item the set of nodes in $U$ with inMIS $=$ true is an MIS of the subgraph induced by $U$.
    \end{enumerate}

    We begin by showing the induction step, i.e., we show that the above statement holds for node set $U$ and parameter $k > 0$ under the assumption that it holds for any subset of $U$ and parameter $k - 1$.
    
    For Condition~1 of the statement observe that outside of the recursive parts, all nodes perform the same instructions and, therefore, spend the same number of rounds. In the recursive parts, the nodes that participate in the recursive calls all return in the same round according to the induction hypothesis, and the nodes that do not participate in the recursive call sleep for the exact number of rounds required for the recursive call. Therefore, Condition~1 holds.

    To see that Condition~2 holds, observe that in Line~\ref{lin:rightRecursionBegin} every node in $U$ with inMIS $=$ unknown performs a recursive call so that the value of inMIS is set to true or false for all of these nodes according to the induction hypothesis.

    It remains to show that Condition~3 holds, i.e., the algorithm actually computes an MIS $M_{G[U]}$ on the subgraph $G[U]$ induced by $U$. Let $I \subseteq U$ be the set of isolated nodes in $G[U]$. The algorithm explicitly takes care of these nodes during the (first) isolated node detection (Lines \ref{lin:isolatedBegin} - \ref{lin:isolatedEnd}). The remaining nodes are partitioned into two sets $A$ and $B$ where $A$ is the set of nodes that participate in the first recursive call, and $B$ is the set of remaining non-isolated nodes.

    Finally, let $C \subseteq B$ be the set of nodes that participate in the second recursive call. To show that $M_{G[U]}$ is an MIS we first show that $M_{G[U]}$ is a \emph{dominating set} in $G[U]$ and then show that $M_{G[U]}$ is an \emph{independent set} in $G[U]$.

    To show that $M_{G[U]}$ is a dominating set in $G[U]$, we show that for every node $v \in U$ either $v$ is in $M_{G[U]}$ or a neighbor of $v$ is in $M_{G[U]}$. If $v \in I$ then $v$ joins the MIS during the first isolated node detection. If $v \in A$ then, according to the induction hypothesis, either $v$ joins the MIS during the first recursive call or some neighbor $u$ of $v$ that is also in $A$ joins the MIS.

    If $v \in B$ we have to distinguish between two subcases:

    If $v \notin C$ then we have two possibilities: (i) $v.\text{inMIS}$ was set to false during the elimination step (Lines \ref{lin:eliminationBegin} -- \ref{lin:eliminationEnd}) and, therefore, $v$ must have a neighbor $u$ that is in $A$ and joined the MIS; (ii) $v.\text{inMIS}$ was set to true during the   second isolated node detection (Lines \ref{lin:isolated2Begin}--\ref{lin:isolated2End}), and therefore is in the MIS (and dominated).

    Finally, the case $v \in C$ (Lines \ref{lin:rightRecursionBegin} -- \ref{lin:rightRecursiveCall}) is analogous to $v \in A$.

    To show that $M_{G[U]}$ is an independent set in $G[U]$, we prove that for any edge $\{ u, v \}$ in $G[U]$ if $u \in M_{G[U]}$ then $v \notin M_{G[U]}$. If $u, v \in A$ then the statement must hold according to the induction hypothesis. If $u \in A$ and $v \in B$ then $v.\text{inMIS}$ is set to false during the elimination step (Lines \ref{lin:eliminationBegin} -\ref{lin:eliminationEnd}) and the statement holds. If $u \in B$ and $v \in A$ then $u.\text{inMIS}$ was not set to false  (i.e., set to true) during  second isolated node detection (Lines \ref{lin:isolated2Begin} - \ref{lin:isolated2End}) and, therefore, we must have $v.\text{inMIS} = \text{false}$.

    For the case $u, v \in B$ we have to consider two subcases:

    If $u \notin C$ or $v \notin C$ then the statement holds due to second isolated node detection step (Lines \ref{lin:isolated2Begin} - \ref{lin:isolated2End}).


    Otherwise (i.e..both $u$ and $v$ are in $C$), the statement holds by the induction hypothesis.

    Finally, we consider the base case of the induction which corresponds to the base case of the recursion where $k = 0$. Recall that at the top level of the recursion, the function \textsc{SleepingMISRecursive} is called with the parameter $k$ set to $\lceil 3\log{n} \rceil$, and then $k$ is decremented from one level of the recursion to the next. Consider the recursion tree corresponding to the execution of the algorithm. Each tree node corresponds to a call of the function \textsc{SleepingMISRecursive} by a subset of the nodes in $G$. The root of the tree corresponds to the initial call made by all nodes in $G$. At an internal tree node, the probability that a non-isolated node in $G$ participates in the first recursive call is $1 / 2$, and the probability that a node in $G$ participates in the second recursive call is at most $1 / 2$. An isolated node does not participate in any recursive calls. Therefore, the probability that a node in $G$ participates in a specific call of \textsc{SleepingMISRecursive} corresponding to a leaf of the recursion tree is at most $2^{-k} = O(2^{-3\log n}) =  O(n^{-3})$. So for any pair of nodes $u, v$ in $G$ the probability that both nodes participate in the same recursive call at the bottom level of the recursion is at most $O(n^{-3})$. Applying the union bound over all pairs of nodes implies that, with high probability, at most one node participates in any given recursive call at the bottom level of the recursion. Therefore, if a node $v$ reaches the base case of the recursion then the function \textsc{SleepingMISRecursive} operates on a subgraph of $G$ containing only $v$, w.h.p. The node $v$ simply joins the MIS (see Lines~\ref{lin:baseCaseBegin}--\ref{lin:baseCaseEnd}). It is easy to check that, thereby, all three conditions of induction hypothesis are satisfied.
\end{proof}}

\subsection{Time (or Round) Complexity} \label{sec:timeComplexity}

The intuition behind the node-averaged running time analysis is that in each recursive call of \textsc{SleepingMISRecursive}, a constant fraction (on expectation) of the nodes calling the function participate in neither the left nor the right recursive call. Thereby, these nodes sleep for almost the entire duration of the call. This effect propagates through the levels of recursion and ultimately leads to a low node-averaged awake complexity.

To formalize this intuition, consider the execution of \textsc{SleepingMISRecursive} by a set of nodes $U \subseteq V$ and with parameter $k \ge 1$. Let $L \subseteq U$ be the set of nodes that participate in the left recursive call, and let $R \subseteq U$ be the set of nodes that participate in the right recursive call. The following lemma bounds the number of nodes participating in the left recursive call.

\begin{lemma} \label{lem:left}
    $\E \left[ \; |L| \; \middle| \; U \; \right] \le \frac{|U|}{2}$.
\end{lemma}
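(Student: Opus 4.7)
The plan is to unpack the definition of $L$ and then compute its expectation by a linearity-of-expectation argument, leveraging the independence of the random bits $X_k$ across levels of the recursion.

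First, I would nail down which nodes end up in $L$. By the time the algorithm reaches the left recursion block (Lines~\ref{lin:leftRecursionBegin}--\ref{lin:leftRecursionEnd}), the only nodes in $U$ that have already had their $\text{inMIS}$ flag set are those identified as isolated in $G[U]$ during the first isolated-node detection (Lines~\ref{lin:isolatedBegin}--\ref{lin:isolatedEnd}); those nodes have $\text{inMIS} = \text{true}$ and therefore skip the left recursive call. Let $U' \subseteq U$ denote the set of \emph{non-isolated} nodes of $G[U]$. Then the test in Line~\ref{lin:leftRecursionBegin} says exactly that
\[
L = \{\, v \in U' : v.X_k = 1 \,\}.
\]

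Next, I would argue that, conditioned on $U$, each bit $v.X_k$ is still an independent fair coin. The reason is that the node set $U$ participating in \textsc{SleepingMISRecursive} at depth parameter $k$ was determined entirely by decisions made at strictly higher levels of the recursion tree, i.e., by the bits $\{v.X_j : j > k,\, v \in V\}$ together with the graph topology. The bits $\{v.X_k\}_{v \in V}$ are not consulted until the current level. Since all bits $v.X_i$ were drawn mutually independently in the initialization of \textsc{SleepingMIS}, conditioning on $U$ gives no information about $X_k$. In particular, $U'$ (which is a deterministic function of $U$ and $G$) is independent of $\{v.X_k\}_{v \in U}$.

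Combining these two observations with linearity of expectation,
\[
\E\bigl[\,|L|\,\bigm|\, U\,\bigr]
= \sum_{v \in U'} \Pr[\,v.X_k = 1 \mid U\,]
= \frac{|U'|}{2}
\le \frac{|U|}{2},
\]
which is the desired bound. I do not anticipate a real obstacle here; the only point that warrants care is the independence claim, which I would make explicit by noting that $X_k$ is drawn once, up front in \textsc{SleepingMIS}, and is only read at the level-$k$ recursive call, so it is untouched by (and hence independent of) the process that selects $U$.
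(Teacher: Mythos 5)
Your proof is correct and follows essentially the same route as the paper's: identify $L$ as the non-isolated nodes of $G[U]$ with $v.X_k = 1$ and apply linearity of expectation. Your explicit justification that conditioning on $U$ does not bias the level-$k$ bits (since $U$ is determined only by the bits $X_j$ with $j > k$) is a point the paper leaves implicit, but the argument is the same.
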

\begin{proof}
    Consider a node $v \in U$. If $v$ is isolated in $G[U]$, it joins the MIS in the first isolated node detection and, thereby, does not participate in the left recursive call. Otherwise, $v$ partipates in the left recursive call if and only if $v.X_k = 1$, which holds with probability $1/2$. The lemma follows by the linearity of expectation.
\end{proof}

The next is a key  lemma that establishes a bound for the right recursive call. The proof of this lemma requires a series of definitions and auxiliary lemmas.
\begin{lemma}[Pruning Lemma] \label{lem:right}
    $\E \left[ \; |R| \; \middle| \; U \; \right] \le \frac{|U|}{4}$.
\end{lemma}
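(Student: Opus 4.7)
The plan is first to reduce the bound on $|R|$ to a combinatorial bound on a simpler random set. Observe that any $v \in R$ must have $X_v = 0$ (otherwise $v$ would participate in the left recursion and have its status fixed there), must have no neighbor in $M(L)$ (otherwise the elimination step would set $v.\text{inMIS}$ to false), and must not be isolated in $G[U]$ (otherwise it joined the MIS during the first isolated-node detection). Hence, writing $U'$ for the non-isolated vertices of $G[U]$ and $B = U' \setminus L$, we have $R \subseteq B \setminus N(M(L))$ and so it suffices to prove $\E[|B \setminus N(M(L))| \mid U] \le |U|/4$.

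To analyze this expectation, I would invoke the principle of deferred decisions by choosing a convenient MIS algorithm to produce $M(L)$ (we are free to do so since Lemma~\ref{lem:right} must hold for \emph{some} valid MIS used by the algorithm). For each vertex $v$ under analysis, I would use the MIS algorithm that processes $V \setminus N[v]$ first, then $N(v)$, then $v$; this yields a valid MIS of $G[L]$ by the greedy rule. Let $M_1$ be the MIS built on $G[L \setminus N[v]]$ in the first phase, and let $D(M_1) \subseteq N(v)$ be those neighbors of $v$ that already have a neighbor in $M_1$. The set of $N(v)$-nodes eligible to enter the MIS in phase two is exactly $T = \{u \in N(v) \setminus D(M_1) : X_u = 1\}$. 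A direct check shows $v \notin N[M(L)]$ iff $X_v = 0$ and $T = \emptyset$, because any non-empty $T$ contributes at least one node of $N(v)$ to the MIS.

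Since $M_1$ is measurable with respect to $\{X_u : u \notin N[v]\}$, it is independent of $X_v$ and of $\{X_u : u \in N(v)\}$. Conditioning on $M_1$ and then revealing the coin tosses on $N(v)$ and $v$ yields
\[
\Pr[v \in B \setminus N(M(L))] \;=\; \tfrac{1}{2} \cdot \E\!\left[ 2^{-|N(v) \setminus D(M_1)|}\right].
\]
Summing over $v \in U'$ will then give the required bound, provided one can control $\sum_{v \in U'} \E[2^{-|N(v) \setminus D(M_1)|}] \le |U|/2$. To establish this, I would couple the phase-two step with an edge-based charging argument: because every $v \in U'$ is non-isolated, it has at least one neighbor $u$, and for each edge $\{v, u\}$ one can account for the pair $(X_v, X_u)$ jointly. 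Using an orientation of edges (say, each vertex points to a lowest-ID neighbor) together with linearity of expectation allows one to aggregate the contributions and control the sum by $|U|/2$.

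The main obstacle I expect is the aggregation step, not the deferred-decision setup. Per-vertex, $\Pr[v \in B \setminus N(M(L))]$ can genuinely exceed $1/4$ — for example, at the endpoint of a path the node's unique neighbor may have its own MIS-dominator outside $N[v]$, inflating $\Pr[T = \emptyset]$ — so the bound $|U|/4$ must come from averaging across vertices. Making the charging argument work cleanly, while respecting the dependencies introduced by $M_1$ (which itself depends on the shared randomness of many nodes outside $N[v]$), is the delicate part; it is precisely here that the flexibility of choosing the MIS algorithm per analyzed vertex, combined with the independence of $X_v$ from $M_1$, must be exploited carefully.
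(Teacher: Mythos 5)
Your reduction $R \subseteq B \setminus N(M(L))$ is correct as a containment, but the intermediate target you reduce to, $\E\left[\,|B \setminus N(M(L))|\,\middle|\,U\,\right] \le |U|/4$, is false, and this is the decisive gap. By discarding the second isolated node detection (Lines~\ref{lin:isolated2Begin}--\ref{lin:isolated2End}) you over-count exactly the nodes the algorithm is designed to rescue: a node $v$ with $v.X_k = 0$ and no neighbor in $M(L)$, but all of whose neighbors have been eliminated, joins the MIS at that step and is \emph{not} in $R$. A star $K_{1,d}$ with center $c$ makes this quantitative: whenever $X_c = 0$ and at least one leaf has $X = 1$, the MIS of $G[L]$ is the set of $1$-leaves, $c$ is eliminated, and every $0$-leaf lies in $B \setminus N(M(L))$ (its only neighbor $c$ is not in $M(L)$); adding the contribution from the case $X_c = 1$ with $c \notin M(L)$, one gets $\Pr[v_i \in B \setminus N(M(L))] \ge \tfrac{5}{16}$ per leaf for $d \ge 2$, so the expectation exceeds $(d+1)/4$ for $d \ge 5$. (All those $0$-leaves join the MIS at the second isolated node detection, which is why $\E[|R|]$ itself is still fine.) Any correct proof must therefore account for that step; the paper's does so explicitly: in its case analysis, the event that \emph{all} neighbors of $v$ are ``neighbor-fixed'' forces every neighbor to set inMIS to false, whence $v$ joins the MIS at the second isolated node detection and $\Pr[v \in R \mid v.X_k = 0 \wedge E] = 0$.

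Two further problems. First, you are not free to choose ``a convenient MIS algorithm'' per analyzed vertex: $R$ is defined by the actual execution, which computes one specific MIS --- the lexicographically first MIS with respect to the ranks $r_{k-1}$ (Lemma~\ref{lem:maximalRank}, Corollary~\ref{cor:samemis}). Your three-phase construction (process $V \setminus N[v]$, then $N(v)$, then $v$) does not respect the rank order, so it generally produces a \emph{different} MIS of $G[L]$, and bounding the probability that $v$ is undominated by that other MIS says nothing about $R$. The deferred-decision idea is the right instinct, but the revelation order must be consistent with the quantity that determines the algorithm's MIS; the paper achieves this by ordering the deferred decisions along the \emph{evaluation sequence} (decreasing $(k-1)$-rank) and proving the sequence-fixed/neighbor-fixed dichotomy of Lemma~\ref{lem:connection}, which then yields the clean per-vertex bound $\Pr[v \in R] \le \tfrac12 \cdot \tfrac12$ with no aggregation needed. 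Second, your edge-charging aggregation --- which you yourself identify as the crux --- is only sketched, and since the per-vertex quantity you would be aggregating can exceed $1/4$ on average (the star again), no charging scheme can close that particular route.
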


Together, Lemmas~\ref{lem:left} and~\ref{lem:right} imply that the expected number of nodes that participate in either of the recursive calls is
\begin{center}
    $\E \left[ \; |L| + |R| \; \middle| \; U \; \right] \le \frac{3}{4} \cdot |U|$.
\end{center}
Thereby, on expectation, at least one-fourth of the nodes in $U$ do not participate in either of the recursive calls.

We now establish the definitions and lemmas necessary to prove Lemma~\ref{lem:right}. We recall from the algorithm that $K = \lceil 3\log{n} \rceil$ is the maximum recursion depth.

\begin{definition} \label{definition-k-rank}
    For any $k$ such that $0 \le k \le K$ and any node $v \in V$ we define the \emph{$k$-rank} of $v$ as the sequence
    \begin{center}
        $r_k(v) = (v.X_k, v.X_{k - 1}, \dots, v.X_1, -1)$.
    \end{center}    
    For two nodes $v, w$ we write $r_k(v) < r_k(w)$ if and only if $r_k(v)$ is lexicographically strictly less than $r_k(w)$. (Note that $r_0(v) = (-1)$, is simply a sentinel value for the base case.)
\end{definition}

Let $N_H(v)$ denotes the neighborhood of $v$ in the (sub)graph $H$. The following lemma provides us with a condition under which a node joins the MIS based on ranks.

\begin{lemma} \label{lem:maximalRank}
    Consider a call of \textsc{SleepingMISRecursive} by a node set $U$ with parameter $k$ during the execution of the algorithm. If $v \in U$ is such that every $w \in N_{G[U]}(v)$ with $r_k(w) > r_k(v)$ sets $w.\text{inMIS}$ to false at some point during the algorithm, then $v$ joins the MIS.
\end{lemma}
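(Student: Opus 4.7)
The plan is to prove the lemma by induction on $k$. The base case $k = 0$ is immediate: when \textsc{SleepingMISRecursive} is invoked with parameter $0$, every participating node sets $v.\text{inMIS} = \text{true}$ in Lines~\ref{lin:baseCaseBegin}--\ref{lin:baseCaseEnd}, and the hypothesis is vacuous since $r_0(w) = (-1) = r_0(v)$ for every node.

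For the inductive step, I would assume the statement at depth $k-1$ and consider a call at depth $k$ on a set $U$ with a node $v \in U$ whose every $G[U]$-neighbor of strictly higher $k$-rank eventually has $\text{inMIS}$ set to false. If $v$ is isolated in $G[U]$, it joins the MIS in the first isolated node detection (Lines~\ref{lin:isolatedBegin}--\ref{lin:isolatedEnd}) and we are done. Otherwise I split on the value of $v.X_k$. If $v.X_k = 1$, then $v$ enters the left recursive call on the set $L$ of non-isolated nodes in $U$ with $X_k = 1$. Every $w \in N_{G[L]}(v)$ has $w.X_k = 1 = v.X_k$, so $r_k(w) > r_k(v)$ if and only if $r_{k-1}(w) > r_{k-1}(v)$. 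The hypothesis for the call on $U$ at depth $k$ therefore transfers without loss to a valid hypothesis for the call on $L$ at depth $k-1$, and the induction hypothesis gives $v \in M$.

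The more delicate case is $v.X_k = 0$, where $v$ sleeps through the left recursion. Any neighbor $w \in N_{G[U]}(v)$ with $w.X_k = 1$ automatically has $r_k(w) > r_k(v)$, so by hypothesis $w.\text{inMIS}$ is eventually false. Since the recursive call on $L$ fully decides the $\text{inMIS}$ value of every node that participates in it, and since $\text{inMIS}$ is never overwritten once assigned, these neighbors in fact have $\text{inMIS} = \text{false}$ the moment the left recursion returns; the remaining neighbors of $v$ (those with $X_k = 0$) still have $\text{inMIS} = \text{unknown}$. Consequently no neighbor of $v$ holds the value true, and the elimination step (Lines~\ref{lin:eliminationBegin}--\ref{lin:eliminationEnd}) does not fire for $v$. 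If the second isolated node detection (Lines~\ref{lin:isolated2Begin}--\ref{lin:isolated2End}) finds every neighbor false, $v$ joins the MIS at once; otherwise $v$ enters the right recursive call on the set $R$ of nodes in $U$ whose $\text{inMIS}$ is still unknown. Every $w \in R$ must have $w.X_k = 0$, since the nodes with $X_k = 1$ were already decided in the left recursion, so once again $r_k(w) > r_k(v)$ if and only if $r_{k-1}(w) > r_{k-1}(v)$ for $w \in N_{G[R]}(v)$. The hypothesis again transfers cleanly, and the induction hypothesis closes the argument.

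The hard part will be the $v.X_k = 0$ subcase: one has to argue that the output of the left recursion cannot accidentally eliminate $v$, and this rests on combining the statement ``$w.\text{inMIS}$ is set to false at some point'' with the monotonicity of $\text{inMIS}$ and the fact that the recursive call fully decides every participating node. The remaining bookkeeping — isolated-node handling at both levels and verifying that $L$ and $R$ are exactly the sets described — is routine, but the rank comparison only collapses cleanly from $r_k$ to $r_{k-1}$ when both endpoints share the same $X_k$ value, which is exactly why the case split on $v.X_k$ is essential.
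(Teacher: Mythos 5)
Your proposal is correct and follows essentially the same route as the paper's proof: induction on $k$, the isolated-node case, and the case split on $v.X_k$ with the observation that $r_k$ comparisons collapse to $r_{k-1}$ comparisons exactly when the two endpoints share the same $X_k$ bit. The only cosmetic difference is in the $v.X_k=0$ subcase, where you invoke the fact that the left recursion fully decides its participants, while the paper gets by with the weaker observation that a node which eventually sets $\text{inMIS}$ to false can never have set it to true (since the variable is write-once); both yield the needed conclusion that $v$ survives the synchronization step.
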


\onlyShort
{
    \begin{proof}[Proof Sketch.]
        The main idea is to use induction on $k$ and then do a case-by-case analysis depending on the value of the variable $v.X_k$. We defer the full proof to the long version of the paper (attached in the appendix) for the sake of preserving space here.
    \end{proof}
}

\onlyLong
{
\begin{proof}
    We prove the lemma by \emph{induction} on $k$.

    The \emph{induction base} with $k = 0$ corresponds to the base case of the recursion and holds trivially because if $k = 0$, then \textsc{SleepingMISRecursive} sets $v$ to be in MIS (Lines \ref{lin:baseCaseBegin} -- \ref{lin:baseCaseEnd}) regardless of rank value.

    For the \emph{induction step} we assume that the statement holds for $k - 1$ and show that it holds for $k$. Consider a call of \textsc{SleepingMISRecursive} by a node set $U$ with parameter $k$, and let $v \in U$ be such that every $w \in N_{G[U]}(v)$ with $r_k(w) > r_k(v)$ sets $w.\text{inMIS}$ to false at some point during the algorithm. If $v$ is isolated in $G[U]$ then $v$ joins the MIS in the first isolated node detection. Otherwise, we distinguish two cases based on the value of $v.X_k$.

    If $v.X_k = 1$ then $v$ participates in the left recursive call with parameter $k - 1$. Let $A$ be the set of $v$'s neighbors in $G[U]$ that also participate in the left recursive call. We show that each $w \in A$ with $r_{k - 1}(w) > r_{k - 1}(v)$ sets $w.\text{inMIS}$ to false at some point.

    Consider a node $w \in A$. If $r_k(w) \le r_k(v)$ then it holds $r_{k - 1}(w) \le r_{k - 1}(v)$ since $w.X_k = v.X_k = 1$. If $r_k(w) > r_k(v)$ then $w$ sets $w.\text{inMIS}$ to false at some point by definition. Thereby, the induction hypothesis implies that $v$ joins the MIS.

    If $v.X_k = 0$ then $v$ sleeps during the left recursive call. A node $w \in N_{G[U]}(v)$ with $r_k(w) \le r_k(v)$ must have $w.X_k = 0$. Thereby, $w$ also sleeps during the left recursive call, and it cannot cause $v$ to set $v.\text{inMIS}$ to false in the synchronization step. A node $w \in N_{G[U]}(v)$ with $r_k(w) > r_k(v)$ never sets $w.\text{inMIS}$ to true by definition. Thereby, such a node also cannot cause $v$ to set $v.\text{inMIS}$ to false in the synchronization step. If after the synchronization step it holds $w.\text{inMIS} = \text{false}$ for every $w \in N_{G[U]}(v)$, then $v$ joins the MIS in the second isolated node detection.

    Otherwise, $v$ participates in the right recursive call. Let $A$ be the set of $v$'s neighbors in $G[U]$ that also participate in the right recursive call. We show that each $w \in A$ with $r_{k - 1}(w) > r_{k - 1}(v)$ sets $w.\text{inMIS}$ to false at some point.

    Consider a node $w \in A$. If $r_k(w) \le r_k(v)$ then it holds $r_{k - 1}(w) \le r_{k - 1}(v)$ since $w.X_k = v.X_k = 0$. If $r_k(w) > r_k(v)$ then $w$ sets $w.\text{inMIS}$ to false at some point by definition. Thereby, the induction hypothesis implies that $v$ joins the MIS.
\end{proof}
}

\begin{definition} \label{definition-evaluation-sequence}
    Consider a call of \textsc{SleepingMISRecursive} by a node set $U \subseteq V$ with parameter $k$ during the execution of the algorithm. Let $(v_1, \dots, v_{|U|})$ be the sequence of the nodes in $U$ sorted by lexicographically decreasing $(k - 1)$-rank. We refer to this sequence as the \emph{evaluation sequence}.
\end{definition}

\begin{remark}
    Note that the order of the nodes in the evaluation sequence is defined based on the $(k - 1)$-rank of the nodes and not the $k$-rank. Intuitively, the $(k - 1)$-rank of a node is the sequence of random decisions used in \emph{future} recursions. The following lemma shows that the evaluation sequence is well defined.
\end{remark}

\begin{lemma} \label{lem:distinctRanks}
    For every call of \textsc{SleepingMISRecursive} by a node set $U$ with parameter $k$ during the execution of the algorithm and for any $v, w \in U$, it holds with high probability that $r_{k - 1}(v) \neq r_{k - 1}(w)$.
\end{lemma}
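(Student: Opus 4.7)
The plan is to apply a union bound over all unordered pairs of distinct nodes in $V$ and over all recursion levels $k \in \{1, \ldots, K\}$, where $K = \lceil 3 \log n \rceil$. The crucial feature of the algorithm to exploit is that the random bits $v.X_1, \ldots, v.X_K$ are drawn once at the start of \textsc{SleepingMIS} and are mutually independent across both nodes and indices; in particular, the bits governing the $(k-1)$-rank are disjoint from the bits governing whether two nodes meet at level $k$.

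First I would establish the following structural necessary condition: if two distinct nodes $v, w$ both lie in the node set $U$ of some call of \textsc{SleepingMISRecursive} with parameter $k$, then $v.X_i = w.X_i$ for every $i \in \{k+1, \ldots, K\}$. Indeed, consider any level $i$ with $k < i \le K$ on the path of recursive calls from the root down to this call. Since $v$ and $w$ both reach this call, they must have taken the same branch at level $i$ (either both left or both right). If both went left, then by the left-recursion condition in Line~\ref{lin:leftRecursionBegin} both had $X_i = 1$. If both went right, then invoking Lemma~\ref{lem:correctness} on the strictly shallower left subcall at level $i$, any node with $X_i = 1$ would have had its inMIS value fixed by the left recursion and thus could not have re-entered the right subcall; hence both must have had $X_i = 0$. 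Either way, $v.X_i = w.X_i$.

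Next, by the definition of $(k-1)$-rank, the event $r_{k-1}(v) = r_{k-1}(w)$ requires $v.X_i = w.X_i$ for every $i \in \{1, \ldots, k-1\}$. Combined with the structural condition, the joint event ``$v, w$ share a call at level $k$ and $r_{k-1}(v) = r_{k-1}(w)$'' forces $v.X_i = w.X_i$ for every $i \in \{1, \ldots, K\} \setminus \{k\}$, that is, $K - 1$ independent pairs of fair random bits must simultaneously agree. This event has probability at most $2^{-(K-1)} \le 2/n^3$.

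Finally, a union bound over the at most $\binom{n}{2}$ pairs and $K = O(\log n)$ values of $k$ gives an overall failure probability of at most $\binom{n}{2} \cdot K \cdot 2^{-(K-1)} = O(\log n / n) = o(1)$, establishing the lemma with high probability. The main obstacle is justifying the right-branch case of the structural condition without circularity; the clean resolution is to cite Lemma~\ref{lem:correctness} on the strictly shallower left subcall, whose correctness (holding with high probability by the induction underlying that lemma) guarantees that every participating node leaves the subcall with inMIS determined and therefore cannot be eligible for the right subcall.
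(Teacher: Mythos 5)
Your proof is correct and follows essentially the same route as the paper's: both reduce the event to forced agreement of $v.X_i = w.X_i$ on the $K-1$ indices $i \neq k$ (indices above $k$ from sharing the call, indices below $k$ from rank equality), bound this by $2^{-(K-1)} \le 2n^{-3}$, and finish with a union bound. The only differences are that you explicitly justify the structural condition for the right-branch case via the correctness induction (which the paper asserts without proof) and you additionally union over the $O(\log n)$ recursion levels (which the paper omits but which only costs a logarithmic factor); these are refinements, not a different argument.
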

\begin{proof}
    For two nodes $v, w \in V$ to participate in the same call with node set $U$ and parameter $k$, we must have $v.X_i = w.X_i$ for all $i$ such that $k < i \le K$. For both nodes to have the same $(k - 1)$-rank, we must have $v.X_i = w.X_i$ for all $i$ such that $1 \le i < k$. Therefore, we have
    \begin{center}
        $\Pr[\, v, w \in U \text{ and } r_{k - 1}(v) = r_{k - 1}(w)\, ] = 2^{-{K - 1}} \le 2n^{-3}$.
    \end{center}
    The statement holds by applying the union bound over all pairs of nodes.
\end{proof}

\paragraph{Using the principle of ``deferred decisions''.}

The proof of Lemma~\ref{lem:right} is based on the \emph{principle of deferred decisions}\footnote{For an introduction to the principle see e.g., \cite{upfal}.}: We still focus on a call of \textsc{SleepingMISRecursive} by a node set $U$ with parameter $k$ during the execution of the algorithm. We assume for the sake of analysis that the values of the variables $X_k$ are not fixed at the beginning of the algorithm. However, the remaining $X_i$ for $i \neq k$ are fixed, which means that the set $U$, the $(k - 1)$-ranks of the nodes, and the evaluation sequence $(v_1, \dots, v_{|U|})$ have all been determined. We then fix the values of the $X_k$ using the following deferred decision process.


\begin{algorithm}
    \begin{enumerate}
        \item Let $v_i$ be the node in the evaluation sequence with the smallest $i$ such that $v_i.X_k$ has not yet been fixed. We fix the value of $v_i.X_k$ by flipping a fair coin and we say that $v_i$ is \emph{sequence-fixed}.
        
        \item If $v_i.X_k = 1$, we also fix the value of $w.X_k$ for each neighbor $w \in N_{G[U]}(v_i)$ for which $w.X_k$ has not yet been fixed, again by flipping independent and fair coins. We say that such a neighbor $w$ is \emph{neighbor-fixed}.
        
        \item We repeat these steps until the variable $X_k$ is fixed for all nodes in the evaluation sequence.
    \end{enumerate}

  \caption*{Deferred Decision Process}
\end{algorithm}


Note that the above process does \emph{not} change the behavior of the algorithm. It merely serves as a useful tool that allows us to analyze the algorithm as demonstrated in the following lemma.

\begin{lemma} \label{lem:connection}
  The following statements hold for every node $v \in U$.
  \begin{enumerate}
    \item \label{case:sequenceFixed}
    If $v$ is sequence-fixed and $v.X_k = 1$,
    then $v$ sets $v.\text{inMIS}$ to true before the synchronization step.

    \item \label{case:neighborFixed}
    If $v$ is neighbor-fixed,
    then $v$ sets $v.\text{inMIS}$ to false before the second isolated node detection.
  \end{enumerate}
\end{lemma}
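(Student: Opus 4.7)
My plan is to prove the two statements by simultaneous induction on the position $i$ of $v = v_i$ in the evaluation sequence $(v_1, \ldots, v_{|U|})$, since part~1 for $v_i$ naturally relies on part~2 for nodes earlier in the sequence and part~2 for $v_i$ relies on part~1 for such earlier nodes. The base case $i=1$ is essentially free: $v_1$ cannot be neighbor-fixed (so part~2 is vacuous), and if $v_1$ is sequence-fixed with $v_1.X_k = 1$ then $v_1$ has the lexicographically largest $(k-1)$-rank in $U$, so the hypothesis of Lemma~\ref{lem:maximalRank} applied to the left recursive call (on $L$ with parameter $k-1$) is vacuously satisfied and $v_1$ joins the MIS inside that call, which finishes before the synchronization step.

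For the inductive step, to handle part~1 I would apply Lemma~\ref{lem:maximalRank} to the left recursive call on $L$ with parameter $k-1$. The crucial observation is that any $w \in N_{G[L]}(v_i)$ preceding $v_i$ in the evaluation sequence must have been neighbor-fixed: since $v_i$ was still unfixed when the deferred-decision process reached it, no neighbor of $v_i$ earlier in the sequence can have been sequence-fixed with $X_k = 1$; but $w \in L$ forces $w.X_k = 1$, so $w$ cannot be sequence-fixed and is therefore neighbor-fixed. The inductive hypothesis (part~2) then yields $w.\text{inMIS} = \text{false}$ at some point, verifying the hypothesis of Lemma~\ref{lem:maximalRank} and placing $v_i$ in the MIS inside the left recursive call, which finishes before the synchronization step.

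For part~2, I would trace back to the $v_j$ (with $j < i$) whose sequence-fixing step with $v_j.X_k = 1$ caused $v_i$ to become neighbor-fixed; the inductive hypothesis (part~1) then gives that $v_j.\text{inMIS}$ is set to true before the synchronization step. Two sub-cases arise. If $v_i.X_k = 0$, then $v_i$ sleeps through the left recursion and is eliminated by $v_j$'s true value during the elimination step (Lines~\ref{lin:eliminationBegin}--\ref{lin:eliminationEnd}), which precedes the second isolated node detection. If $v_i.X_k = 1$, then $v_i, v_j \in L$ are adjacent in $G[L]$; by the correctness guarantee of Lemma~\ref{lem:correctness}, the left recursive call assigns $v_i.\text{inMIS}$ a definite value, and independence of the computed MIS in $G[L]$ combined with $v_j.\text{inMIS} = \text{true}$ forces that value to be false, again before the second isolated node detection.

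The main subtlety is making the simultaneous induction well-founded: every appeal to the inductive hypothesis in either part must refer strictly to earlier positions in the evaluation sequence, which is precisely what the structure of the deferred-decision process supplies. A secondary obstacle is the $v_i.X_k = 1$ branch of part~2, where one has to argue that the left recursive call actually \emph{writes} $v_i.\text{inMIS}$ rather than leaving it as \emph{unknown}; I plan to dispatch this by invoking the correctness guarantee of Lemma~\ref{lem:correctness} applied to the left recursive call, which is available by the time this lemma is used.
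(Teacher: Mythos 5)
Your proposal is correct and follows essentially the same route as the paper's proof: complete induction along the evaluation sequence, with Statement~1 reduced to Lemma~\ref{lem:maximalRank} by arguing that every higher-ranked neighbor must be neighbor-fixed (hence eliminated by the inductive hypothesis for Statement~2), and Statement~2 obtained by tracing the neighbor-fixing back to a sequence-fixed neighbor with $X_k = 1$ and invoking correctness of the left recursive call. The only cosmetic difference is that you apply Lemma~\ref{lem:maximalRank} to the left recursive call at level $k-1$ rather than to the current call at level $k$, and your case split on $v_i.X_k$ in Statement~2 is in fact stated more carefully than the paper's.
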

\begin{proof}
  We show the lemma by complete induction on the evaluation sequence $(v_1, \dots, v_{|U|})$.
  The node $v_1$ is sequence-fixed by definition.
  If $v_1.X_k = 0$, none of the statements apply and the lemma holds.
  If $v_1.X_k = 1$, we have to show that Statement~\ref{case:sequenceFixed} holds.
  By definition of the evaluation sequence we have $r_{k - 1}(v_1) > r_{k - 1}(v_j)$ for all $j > 1$.
  Together with $v_1.X_k = 1$ this implies that $r_k(v_1) > r_k(v_j)$ for all $j > 1$.
  Therefore, Lemma~\ref{lem:maximalRank} implies that $v_1$ joins the MIS.
  Since $v_1.X_k = 1$, $v_1$ must set $v_1.\text{inMIS}$ to true in the first isolated node detection or the left recursive call.
  Thereby, Statement~\ref{case:sequenceFixed} and the induction base hold.

  For the induction step we consider an index $i \ge 2$ and show that the lemma holds for $v_i$ under the assumption
  that it holds for all $v_j$ with $j < i$.
  We first show that Statement~\ref{case:sequenceFixed} holds.
  Suppose that $v_i$ is sequence-fixed and $v_i.X_k = 1$.
  We have to show that $v_i$ sets $v_i.\text{inMIS}$ to true before the synchronization step.
  Consider a node $w \in N_{G[U]}(v_i)$.
  If $r_{k - 1}(w) \le r_{k - 1}(v_i)$ then we must also have $r_k(w) \le r_k(v_i)$ because $v_i.X_k = 1$.
  If $r_{k - 1}(w) > r_{k - 1}(v_i)$ then the lemma holds for $w$ according to the induction hypothesis.
  We distinguish two cases.
  If $w$ is sequence-fixed then we must have $w.X_k = 0$ because otherwise $v_i$ would be neighbor-fixed, which is a contradiction.
  Therefore, we have $r_k(w) \le r_k(v_i)$ in this case.
  If $w$ is neighbor-fixed then $w$ sets $w.\text{inMIS}$ to false before the second isolated node detection
  according to Statement~\ref{case:neighborFixed}.
  In summary, we have that for every node $w \in N_{G[U]}(v_i)$ it holds $r_k(w) \le r_k(v_i)$
  or $w$ sets $w.\text{inMIS}$ to false at some point during the algorithm.
  Thereby, Lemma~\ref{lem:maximalRank} implies that $v_1$ joins the MIS.
  Since $v_1.X_k = 1$, $v_1$ must set $v_1.\text{inMIS}$ to true in the first isolated node detection or the left recursive call.
  Therefore, Statement~\ref{case:sequenceFixed} holds.

  Finally, we show that Statement~\ref{case:neighborFixed} holds.
  Suppose that $v_i$ is neighbor-fixed.
  We have to show that $v_i$ sets $v_i.\text{inMIS}$ to false before the second isolated node detection.
  Since $v_i$ is neighbor-fixed, it must have a neighbor $w \in N_{G[U]}(v_i)$ such that
  $r_{k - 1}(w) > r_{k - 1}(v_i)$, $w$ is sequence-fixed, and $w.X_k = 1$.
  Since $r_{k - 1}(w) > r_{k - 1}(v_i)$ the lemma holds for $w$ according to the induction hypothesis.
  Because $w$ is sequence-fixed and $w.X_k = 1$, Statement~\ref{case:sequenceFixed} implies
  that $w$ sets $w.\text{inMIS}$ to true before the synchronization step.
  Since the algorithm is correct (see Lemma \ref{lem:correctness}), this implies that $v_i$ must eventually set $v_i.\text{inMIS}$ to false.
  If $v_i.X_k = 0$ then $v_i$ must set $v_i.\text{inMIS}$ to false in the left recursive call.
  Otherwise, $v_i$ learns during the synchronization step that $w$ set $w.\text{inMIS}$ to true
  and, therefore, $v_i$ sets $v_i.\text{inMIS}$ to false in the synchronization step.
  Thereby, Statement~\ref{case:neighborFixed} holds.
\end{proof}

We are finally ready to prove Lemma~\ref{lem:right}.

\begin{proof}[Proof of Lemma~\ref{lem:right}]
  Consider a call of \textsc{SleepingMISRecursive} by a node set $U$ with parameter $k$ during the execution of the algorithm.
  We show that for each $v \in U$ it holds $\Pr[ \, v \in R \, ] \le 1 / 4$.
  The linearity of expectation then implies $\E \left[ \; |R| \; \middle| \; U \; \right] \le |U| / 4$, so the lemma holds.

  If $v$ is isolated in $G[U]$, then $v$ joins the MIS in the first isolated node detection
  and we have $\Pr[ \, v \in R \, ] = 0$.
  Thereby, the statement holds for isolated nodes.
  If $v$ is not isolated, we can apply the \emph{law of total probability} to obtain
  \begin{align*}
    \Pr[ \, v \in R \, ]
    = & \Pr[ \, v \in R \, | \, v.X_k = 1 \, ] \cdot \Pr[ \, v.X_k = 1 \, ] \\
    + & \Pr[ \, v \in R \, | \, v.X_k = 0 \, ] \cdot \Pr[ \, v.X_k = 0 \, ].
  \end{align*}
  By definition, it holds $\Pr[ \, v \in R \, | \, v.X_k = 1 \, ] = 0$ and $\Pr[ \, v.X_k = 0 \, ] = 1 / 2$.
  Therefore, we have
  \[
    \Pr[ \, v \in R \, ]
    = \frac{1}{2} \cdot \Pr[ \, v \in R \, | \, v.X_k = 0 \, ].
  \]

  We define $E$ as the event that all neighbors in $N_{G[U]}(v)$ are neighbor-fixed.
  Recall that $v$ is not isolated.
  We apply the \emph{law of total probability} again to obtain
  \begin{align*}
    &\Pr[ \, v \in R \, | \, v.X_k = 0 \, ]\\
    &=  \Pr[ \, v \in R \, | \, v.X_k = 0 \, \wedge \, E \, ] \cdot \Pr[ \, v.X_k = 0 \, \wedge \, E \,] \\
    &+  \Pr[ \, v \in R \, | \, v.X_k = 0 \, \wedge \, \overline{E} \, ] \cdot \Pr[ \, v.X_k = 0 \, \wedge \, \overline{E} \,].
  \end{align*}

  We first determine the probability $\Pr[ \, v \in R \, | \, v.X_k = 0 \, \wedge \, E \, ]$.
  If $v.X_k = 0$, $v$ cannot cause any of its neighbors to be neighbor-fixed.
  Hence, if the event $E$ occurs then every node in $N_{G[U]}(v)$ must have been neighbor-fixed by some other node.
  In this case, Lemma~\ref{lem:connection} implies that every node $w \in N_{G[U]}(v)$ sets $w.\text{inMIS}$ to false
  before the second isolated node detection.
  Hence, $v$ sets $v.\text{inMIS}$ to true during the second isolated node detection.
  This implies that $\Pr[ \, v \in R \, | \, v.X_k = 0 \, \wedge \, E \, ] = 0$.

  Next, we establish an upper bound on $\Pr[ \, v \in R \, | \, v.X_k = 0 \, \wedge \, \overline{E} \, ]$.
  If $\overline{E}$ occurs, then there is a node $w \in N_{G[U]}(v)$ that is sequence-fixed.
  With probability $1 / 2$, we have $w.X_k = 1$.
  In this case, Lemma~\ref{lem:connection} implies that $w$ sets $w.\text{inMIS}$ to true before the synchronization step.
  Thereby, $v$ sets $v.\text{inMIS}$ to false during the synchronization step.
  This implies $\Pr[ \, v \in R \, | \, v.X_k = 0 \, \wedge \, \overline{E} \, ] \le 1 / 2$.

  By combining the equations above, we get
  $\Pr[ \, v \in R \, | \, v.X_k = 0 \, ] \le 1 / 2$
  and, therefore,
  $\Pr[ \, v \in R \, ] \le 1 / 4$.
\end{proof}


\onlyLong{
\begin{lemma} \label{lem:levelCost}
    Consider all calls of \textsc{SleepingMISRecursive} with the same parameter $k$. Let $U_j$ be the set of nodes participating in the $j$-th of these calls. Note that all the $U_j$ are pairwise disjoint by definition. We define
        $$Z_k = \sum_j |U_j|$$
    to be the number of nodes participating in the recursive calls with parameter $k$. Then the following holds for all $i$ such that $0 \leq i \leq K$:
    \begin{center}
        $\E \left[ Z_{K - i} \right] \le \left( \frac{3}{4} \right)^i \cdot n$.
    \end{center}
\end{lemma}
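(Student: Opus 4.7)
My plan is a straightforward induction on $i$, using Lemmas~\ref{lem:left} and~\ref{lem:right} as the engine. The base case $i = 0$ is immediate because the top-level invocation of \textsc{SleepingMISRecursive} is executed by all $n$ nodes, so $Z_K = n$ deterministically and the bound $(3/4)^0 \cdot n = n$ holds.

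For the inductive step, suppose $\E[Z_{K - i}] \le (3/4)^i \cdot n$ and consider level $K - i - 1$. The recursive calls with parameter $K - i - 1$ are exactly the left and right recursive subcalls spawned by the calls with parameter $K - i$. Since a node with $v.X_{K - i} = 1$ is in the corresponding $L$-set while a node with $v.X_{K - i} = 0$ that enters the right recursion is in the corresponding $R$-set, the sets $L_j$ and $R_j$ associated to the $j$-th call with parameter $K - i$ are disjoint. Hence
\begin{equation*}
    Z_{K - i - 1} \;=\; \sum_j \bigl( |L_j| + |R_j| \bigr).
\end{equation*}
Conditioning on the node sets $U_j$ active at level $K - i$ and applying Lemmas~\ref{lem:left} and~\ref{lem:right} to each summand gives
\begin{equation*}
    \E\bigl[ Z_{K - i - 1} \,\big|\, \{U_j\} \bigr]
    \;=\; \sum_j \Bigl( \E[|L_j| \mid U_j] + \E[|R_j| \mid U_j] \Bigr)
    \;\le\; \sum_j \tfrac{3}{4} |U_j|
    \;=\; \tfrac{3}{4}\, Z_{K - i}.
\end{equation*}
Taking expectation over the configuration at level $K - i$ and applying the tower property yields $\E[Z_{K - i - 1}] \le \tfrac{3}{4}\,\E[Z_{K - i}] \le (3/4)^{i+1} \cdot n$, which completes the induction.

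The only subtle point to verify carefully is that Lemmas~\ref{lem:left} and~\ref{lem:right} can be applied independently to each of the calls $U_j$ at a fixed level and then summed; this is legitimate because each of those lemmas is stated conditionally on the node set $U$ executing the call, and once we condition on the entire collection $\{U_j\}$ of active sets at level $K - i$, the randomness generating $L_j$ and $R_j$ inside different calls operates on disjoint node sets. I do not expect any real obstacle here—the work has already been done in Lemmas~\ref{lem:left} and~\ref{lem:right}—so the lemma reduces to a one-step conditional expectation argument chained through the $K + 1$ levels of recursion.
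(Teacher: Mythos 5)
Your proposal is correct and follows essentially the same route as the paper: an induction on $i$ with base case $Z_K = n$, identifying the level-$(K-i-1)$ calls as the left and right subcalls of the level-$(K-i)$ calls, conditioning on the collection $\{U_j\}$, applying Lemmas~\ref{lem:left} and~\ref{lem:right} summand by summand, and closing with the tower property. No substantive differences from the paper's argument.
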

}

\onlyShort{
\begin{lemma} \label{lem:levelCost}
    Consider all calls of \textsc{SleepingMISRecursive} with the same parameter $k$. Let $U_j$ be the set of nodes participating in the $j$-th of these calls. Note that all the $U_j$ are pairwise disjoint by definition. We define $Z_k = \sum_j |U_j|$ to be the number of nodes participating in the recursive calls with parameter $k$. Then the following holds for all $i$ such that $0 \leq i \leq K$: $\E \left[ Z_{K - i} \right] \le \left( \frac{3}{4} \right)^i \cdot n$.
\end{lemma}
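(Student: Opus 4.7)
The plan is a straightforward induction on $i$, with the two previous lemmas (Lemma~\ref{lem:left} and the Pruning Lemma, Lemma~\ref{lem:right}) providing the per-level shrinkage factor.

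\textbf{Base case ($i=0$).} At the top level of the recursion, the unique call to \textsc{SleepingMISRecursive} is made with parameter $K$ by the entire vertex set $V$, so $Z_K = n$ deterministically, and hence $\E[Z_K] = n = (3/4)^0 \cdot n$.

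\textbf{Inductive step.} Assume $\E[Z_{K-i}] \le (3/4)^i \cdot n$. I want to show $\E[Z_{K-(i+1)}] \le (3/4)^{i+1} \cdot n$. The key observation is that every node that participates in a call at level $k-1$ must participate in either the left or the right recursive call of some parent call at level $k$. Concretely, if $U_1, U_2, \dots$ are the (pairwise disjoint) node sets calling \textsc{SleepingMISRecursive} with parameter $k$, and $L_j, R_j \subseteq U_j$ are the corresponding left- and right-recursion subsets defined just before Lemma~\ref{lem:left}, then the node sets participating at level $k-1$ are exactly the nonempty sets among $\{L_j\}_j \cup \{R_j\}_j$. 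In particular,
\begin{equation*}
Z_{k-1} \;=\; \sum_j \bigl(|L_j| + |R_j|\bigr).
\end{equation*}

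\textbf{Applying the two shrinkage lemmas.} Conditioning on the collection of parent sets $\{U_j\}$ (which is determined by the random choices made at strictly higher levels of the recursion, i.e., by the bits $X_K, X_{K-1}, \dots, X_{k+1}$), Lemma~\ref{lem:left} gives $\E[|L_j| \mid U_j] \le |U_j|/2$ and Lemma~\ref{lem:right} gives $\E[|R_j| \mid U_j] \le |U_j|/4$. Summing over $j$ and using linearity of conditional expectation,
\begin{equation*}
\E\!\left[\,Z_{k-1} \;\middle|\; \{U_j\}\,\right] \;\le\; \tfrac{3}{4}\sum_j |U_j| \;=\; \tfrac{3}{4}\, Z_k.
\end{equation*}
Taking outer expectations (the tower property) yields $\E[Z_{k-1}] \le (3/4)\,\E[Z_k]$. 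Setting $k = K-i$ and invoking the inductive hypothesis completes the step: $\E[Z_{K-(i+1)}] \le (3/4)\cdot (3/4)^i \cdot n = (3/4)^{i+1}\cdot n$.

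\textbf{Main obstacle.} The only subtle point is the conditioning. Lemma~\ref{lem:right} is proved using the principle of deferred decisions over the $X_k$ bits while freezing the remaining $X_i$'s, so one must verify that the bound $\E[|R_j| \mid U_j] \le |U_j|/4$ is valid when $U_j$ is viewed as an arbitrary (possibly random) realization of the set calling \textsc{SleepingMISRecursive} with parameter $k$. This is fine because the statement of Lemma~\ref{lem:right} is conditional on $U$, and the randomness used inside that call ($X_k$ and all lower-indexed bits) is independent of the randomness that determines $U_j$ (which lives in $X_K,\dots,X_{k+1}$). After observing this independence, the tower-property step above goes through cleanly, and the induction closes.
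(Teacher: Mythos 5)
Your proposal is correct and takes essentially the same route as the paper's proof: induction on $i$ with base case $Z_K = n$, the decomposition $Z_{k-1} = \sum_j\bigl(|L_j| + |R_j|\bigr)$, conditioning on the parent sets $U_j$ to apply Lemmas~\ref{lem:left} and~\ref{lem:right}, and then the tower property together with the induction hypothesis. The only addition is your explicit discussion of why the conditioning is legitimate, which the paper leaves implicit.
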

}

\begin{proof}
  We show the lemma by induction on $i$.
  For the induction base consider $i = 0$.
  This case corresponds to the initial call of the recursive function by all nodes.
  Thus, we have $Z_K = n$ by definition, which shows the induction base.

  For the induction step we assume that the equation holds for an $i$ such that $0 \le i < K$ and show that it also holds for $i + 1$.
  Consider the calls of \textsc{SleepingMISRecursive} with parameter $k = K - i$ during the algorithm.
  Let $m$ be the number of these calls,
  and let $U_j$ be the set of nodes participating in the $j$-th call.
  By definition, we have
  \begin{equation} \label{eqn:Z_K-i}
    Z_{K - i} = \sum_{j = 1}^m |U_j|.
  \end{equation}
  Each of the $m$ calls creates a left and right recursive call at the next level of recursion.
  For the call $j$-th call let $L_j, R_j \subseteq U_j$ be the node sets that participate in the left and right recursive call, respectively.
  By definition, we have
  \[
    Z_{K - (i + 1)}
    = \sum_{j = 1}^m \left( |L_j| + |R_j| \right).
  \]
  This implies that
  \begin{align*}
    \E [ \, Z_{K - (i + 1)} \, | \, U_1, \dots, U_m \, ]
    &=  \E \left[ \sum_{j = 1}^m \, \left( |L_j| + |R_j| \right) \, \middle| \, U_1, \dots, U_m \, \right] \\
    &=  \sum_{j = 1}^m \E \left[ \, |L_j| + |R_j| \, \middle| \, U_j \, \right]\\
    &\leq   \frac{3}{4} \sum_{j = 1}^m |U_j|   =   \frac{3}{4} \cdot Z_{K - i}\text{,}
  \end{align*}
  where the inequality follows from Lemmas~\ref{lem:left} and~\ref{lem:right}, and the last equality follows from Equation~\ref{eqn:Z_K-i}. Taking the expectation on both sides of this equation and applying the induction hypothesis gives us
\begin{center}
    $\E[Z_{K - (i + 1)}]   \leq   \frac{3}{4} \cdot \E[Z_{K - i}]   \leq   (\frac{3}{4})^{i + 1} \cdot n$.
\end{center}
\end{proof}

\begin{lemma} \label{lemma-first-algorithm-node-averaged-awake-complexity}
    The expected node-averaged awake complexity of Algorithm~\ref{alg:sleepingMIS} in the sleeping model is $O(1)$.
\end{lemma}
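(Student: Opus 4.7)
The plan is to bound each node's expected awake time by charging it to the levels of recursion it participates in, and then use Lemma~\ref{lem:levelCost} together with the geometric decay $(3/4)^i$ to collapse the total cost to $O(n)$, yielding an $O(1)$ average.

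First, I would observe that, outside of the recursive calls, every execution of \textsc{SleepingMISRecursive} with parameter $k \ge 1$ keeps a participating node awake for only a constant number of rounds: one round for the first isolated node detection (Lines~\ref{lin:isolatedBegin}--\ref{lin:isolatedEnd}), one round for the synchronization step (Lines~\ref{lin:eliminationBegin}--\ref{lin:eliminationEnd}), and one round for the second isolated node detection (Lines~\ref{lin:isolated2Begin}--\ref{lin:isolated2End}). The decisions whether to recurse or sleep add no additional awake rounds. The base case ($k=0$) costs $O(1)$ awake work as well. Crucially, a node that chooses to \emph{sleep} during a recursive call (Lines~\ref{lin:leftSleep} and~\ref{lin:rightSleep}) contributes nothing to its awake count during that call.

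Next, I would bound the total awake rounds summed across all nodes by a sum over recursion levels. For each $k \in \{0,1,\dots,K\}$, the number of nodes that are awake for the non-recursive portion of any call with parameter $k$ is exactly $Z_k = \sum_j |U_j|$, since the $U_j$ are the disjoint node sets participating in level-$k$ calls. Therefore the total awake time (summed over all nodes) is at most
\begin{equation*}
    c \cdot \sum_{k=0}^{K} Z_k
\end{equation*}
for some absolute constant $c \ge 1$. Taking expectations and applying Lemma~\ref{lem:levelCost} gives
\begin{equation*}
    \E\!\left[\sum_{v \in V} A_v\right] \;\le\; c \cdot \sum_{k=0}^{K} \E[Z_k] \;=\; c \cdot \sum_{i=0}^{K} \E[Z_{K-i}] \;\le\; c \cdot n \cdot \sum_{i=0}^{K} \left(\frac{3}{4}\right)^i \;\le\; 4c \cdot n,
\end{equation*}
where the geometric series converges to a constant independent of $K$ and $n$.

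Dividing by $n$ then gives the node-averaged awake complexity $\E[A] = \frac{1}{n}\E[\sum_v A_v] = O(1)$. The main conceptual obstacle is the bookkeeping in the second step: making sure that the $O(1)$ per-call awake cost is charged only to nodes actually participating at that level, and that nodes that sleep through a recursive sub-call really do contribute zero to their awake count during that sub-call. Once that accounting is in place, Lemma~\ref{lem:levelCost} carries the argument, since the geometric shrinkage $(3/4)^i$ dominates any factor polynomial in $K = O(\log n)$.
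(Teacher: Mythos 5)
Your proof is correct and follows essentially the same route as the paper's: charge $O(1)$ awake rounds per participating node per call, sum $Z_k$ over recursion levels, and apply Lemma~\ref{lem:levelCost} with the geometric series. The only (immaterial) omission is the $O(1)$ per-node initialization cost in \textsc{SleepingMIS}, which the paper accounts for explicitly.
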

\begin{proof}
    For each node $v \in V$, we define a cost $C(v)$ which is the number of rounds $v$ is awake during the execution of the algorithm. We define $C = \sum_{v \in V} C(v)$ to be the total cost of the algorithm. The expected node-average complexity is then $\frac{\E[C]}{n}$.

    The initialization at the beginning of the algorithm incurs a cost of $O(1)$ for each node. To compute the cost of the recursive part of the algorithm, we use the following observation: Consider a node $v$ and a call of \textsc{SleepingMISRecursive} that $v$ participates in. If we ignore the cost of the recursive calls, the participation of $v$ in the call incurs a cost of $O(1)$. This implies that we can compute the total cost of the recursive part by computing the number of nodes that participate in the recursive calls at the different levels of the recursion. Recall that we defined the number of nodes participating in the calls with common parameter $k$ as $Z_k$. Formally, we have
    \begin{center}
        $C = O(1) \cdot \sum_{k = 0}^K Z_k$.
    \end{center}    
    Taking the expectation on both sides and applying Lemma~\ref{lem:levelCost} gives us
    \begin{align*}
        \E[C]
        &=   O(1) \cdot \sum_{k = 0}^K \E[Z_k]   =   O(1) \cdot \sum_{i = 0}^K \E[Z_{K - i}]\\
        &\leq   O(1) \cdot \sum_{i = 0}^K \left( \frac{3}{4} \right)^i \cdot n   \leq   O(n) \cdot \sum_{i = 0}^\infty \left( \frac{3}{4} \right)^i   =   O(n).
    \end{align*}
    Therefore, the expected node-average awake complexity of the algorithm is $\frac{\E[C]}{n} = O(1)$.
\end{proof}
\onlyLong{\subsection{Analysis of the worst-case awake complexity of Algorithm \ref{alg:sleepingMIS}} \label{subsection-analysis-of-the-worst-case-awake-complexity-of-the-first-algorithm}

\begin{lemma} \label{lemma-first-algorithm-worst-case-awake-round-complexity}
    The worst-case awake round complexity of Algorithm \ref{alg:sleepingMIS} is $O(\log{n})$.
\end{lemma}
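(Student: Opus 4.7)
The plan is to set up a simple one-variable recurrence for the maximum number of awake rounds any node can spend during a single call to \textsc{SleepingMISRecursive}$(k)$, and then instantiate it at the top level where $k = K = \lceil 3\log n \rceil$. Let $A(k)$ denote the worst-case number of awake rounds a node can accumulate while participating in a call with parameter $k$ (ignoring rounds spent in the sleeping state). The base case $A(0) = O(1)$ is immediate from Lines~\ref{lin:baseCaseBegin}--\ref{lin:baseCaseEnd}.

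The key observation, which makes the recurrence single-branch rather than double-branch, is that a node $v$ participates in at most one of the two recursive calls inside \textsc{SleepingMISRecursive}$(k)$. I would argue this by cases on $v.X_k$. If $v.X_k = 1$ and $v$ is non-isolated, then $v$ takes part in the left recursion; by Lemma~\ref{lem:correctness} applied to the subgraph induced by the participants of that call, the value of $v.\text{inMIS}$ is set to true or false by the time the left recursion returns, so in Line~\ref{lin:rightRecursionBegin} the condition $v.\text{inMIS} = \text{unknown}$ fails and $v$ sleeps through the right recursion (Line~\ref{lin:rightSleep}). If $v.X_k = 0$ or $v$ is isolated, then $v$ does not participate in the left recursion (it sleeps for $T(k-1)$ rounds by Line~\ref{lin:leftSleep}), so the only recursion in which it may be awake is the right one. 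Either way, the cost from the two recursive sub-calls together is at most $A(k-1)$ awake rounds for $v$.

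The remaining work in \textsc{SleepingMISRecursive}$(k)$ outside the recursive calls consists of the first isolated-node detection (one round, Line~\ref{lin:isolatedBegin}), the synchronization/elimination step (one round, Line~\ref{lin:eliminationBegin}), and the second isolated-node detection (one round, Line~\ref{lin:isolated2Begin}), for a total of $O(1)$ awake rounds per call regardless of whether $v$ eventually recurses or sleeps. Combining the two observations yields the recurrence
\begin{equation*}
A(k) \;\le\; A(k-1) + O(1), \qquad A(0) = O(1),
\end{equation*}
which solves to $A(k) = O(k)$. Invoking this at the top-level call, $K = \lceil 3\log n \rceil$, together with the $O(1)$ awake rounds spent in \textsc{SleepingMIS} before the recursive invocation, gives worst-case awake complexity $O(\log n)$.

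I do not anticipate a significant technical obstacle here; the only subtle point is justifying that a node cannot be awake in both the left and the right recursion at the same level, which relies on Lemma~\ref{lem:correctness} (so that after the left recursion, $v.\text{inMIS}$ is actually determined for every participating $v$) and on the synchronization guarantee that sleeping nodes wake up exactly when the recursive call returns. Once that is in place the recurrence is trivial and the $O(\log n)$ bound is immediate from $K = \Theta(\log n)$.
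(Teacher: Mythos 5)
Your proof is correct and follows essentially the same route as the paper's: the paper also argues that a node is awake only $O(1)$ rounds per level of the recursion tree and multiplies by the depth $K = O(\log n)$, though it states this more tersely without your explicit recurrence $A(k) \le A(k-1) + O(1)$ or the case analysis showing a node enters at most one of the two sub-calls. Your version is a more careful write-up of the same argument (one nitpick: the property you need --- that $v.\text{inMIS}$ is determined after the left recursive call returns --- holds deterministically by the structure of the algorithm, so you need not invoke the high-probability correctness lemma, which matters since the theorem claims the $O(\log n)$ awake bound holds always).
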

\begin{proof}
    The worst-case awake round complexity can be seen to be proportional to the depth of the recursion tree of the Algorithm \textsc{SleepingMISRecursive}($k$). We note that the activation of nodes proceeds in a depth-first, left-to-right fashion. Also, each node is awake only a constant number of rounds for possibly every level of the recursion tree. Thus, any node can be awake at most a number of rounds proportional to the depth of the recursion, which is $O(\log{n})$ rounds.
\end{proof}

\begin{lemma} \label{lemma-first-algorithm-worst-case-traditional-round-complexity}
    The worst-case round complexity of Algorithm \ref{alg:sleepingMIS} is $O(n^3)$. 
\end{lemma}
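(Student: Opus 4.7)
The plan is to set up a recurrence $T(k)$ for the worst-case number of rounds (counting both awake and sleeping rounds) needed to complete a single invocation of \textsc{SleepingMISRecursive}($k$), and then solve the recurrence at $k = K = \lceil 3\log n \rceil$. The non-recursive parts of the function contribute only a constant number of rounds: the first isolated node detection (Line~\ref{lin:isolatedBegin}) takes one round, the synchronization step (Line~\ref{lin:eliminationBegin}) takes one round, and the second isolated node detection (Line~\ref{lin:isolated2Begin}) takes one round. The left and right recursive parts each take $T(k-1)$ rounds, and crucially, by design, the nodes that do \emph{not} participate in a recursive call sleep for exactly $T(k-1)$ rounds (Lines~\ref{lin:leftSleep} and~\ref{lin:rightSleep}), so all nodes participating in the current call are synchronized at the end of each recursive part.

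This yields the recurrence $T(k) = 2T(k-1) + 3$ for $k \geq 1$, with $T(0) = O(1)$ for the base case of Lines~\ref{lin:baseCaseBegin}--\ref{lin:baseCaseEnd}. Next I would solve this recurrence by a straightforward induction, obtaining $T(k) = 3(2^k - 1)$ (as already anticipated by the comment in the pseudocode). The induction base is immediate, and the inductive step computes $T(k) = 2 \cdot 3(2^{k-1}-1) + 3 = 3(2^k - 1)$.

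Finally, I would substitute $k = K = \lceil 3\log n \rceil$ into the closed form to obtain $T(K) = 3(2^{\lceil 3\log n \rceil} - 1) = O(n^3)$. Since the top-level call \textsc{SleepingMIS}($K$) consists of $O(1)$ rounds of initialization followed by one invocation of \textsc{SleepingMISRecursive}($K$), the overall worst-case round complexity of Algorithm~\ref{alg:sleepingMIS} is $T(K) + O(1) = O(n^3)$, as claimed. There is no real obstacle in this argument; it is a routine recurrence calculation, and the only thing worth double-checking is that the sleep durations on Lines~\ref{lin:leftSleep} and~\ref{lin:rightSleep} exactly match $T(k-1)$, so that all nodes in the current call proceed in lockstep and the recurrence genuinely measures the wall-clock time of one call.
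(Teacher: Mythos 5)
Your proposal is correct and follows essentially the same route as the paper: the same recurrence $T(k) \leq 2T(k-1) + 3$ with closed form $T(k) \leq 3(2^k - 1)$, evaluated at $k = \lceil 3\log n\rceil$ to give $O(n^3)$. The only cosmetic difference is that the paper takes $T(0) = 0$ rather than $T(0) = O(1)$, which does not affect the bound.
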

\begin{proof}
    The worst-case round complexity of Algorithm \textsc{SleepingMISRecursive}($k$) can be computed recursively. Let $T(k)$ be the worst-case number of rounds taken by \textsc{SleepingMISRecursive}($k$). Then
    \begin{center}
        $T(k)   \leq   2T(k - 1)  +  3$  for  $k > 0$,
    \end{center}    
    since \textsc{SleepingMISRecursive}($k$) calls \textsc{SleepingMISRecursive}($k - 1$) (at most) two times and takes 3 extra rounds. The base case is $T(0) = 0$. The above recurrence gives \onlyShort{$T(k)  \leq  3(2^{k} - 1)$.}
    \onlyLong
    {
        \begin{center}
            $T(k)  \leq  3(2^{k} - 1)$.
        \end{center}    
    }
    
    The depth of the tree is $3\log{n}$ (see Lines \ref{line-depth-of-recursion} and \ref{lin:baseCaseBegin} of the pseudocode of Algorithm \ref{alg:sleepingMIS}; also see Figure \ref{figure-recursion-tree-2}), and hence plugging this value of $k$ gives the worst-case round complexity as $O(n^3)$.
\end{proof}

As an immediate corollary to Lemma \ref{lemma-first-algorithm-worst-case-traditional-round-complexity}, we have
\begin{lemma} \label{lemma-first-algorithm-node-averaged-traditional-round-complexity}
    The node-averaged round complexity of Algorithm \ref{alg:sleepingMIS} is $O(n^3)$.
\end{lemma}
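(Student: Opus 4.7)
The plan is to derive this directly from the worst-case round complexity bound established in Lemma~\ref{lemma-first-algorithm-worst-case-traditional-round-complexity}. Recall that the node-averaged round complexity in the traditional model is defined as $\frac{1}{n}\sum_{v\in V} t_v$, where $t_v$ is the total number of rounds (awake plus sleeping) that node $v$ spends before terminating. Since every node terminates no later than when the entire algorithm terminates, each $t_v$ is upper bounded by the worst-case round complexity.

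First I would observe that Lemma~\ref{lemma-first-algorithm-worst-case-traditional-round-complexity} gives $\max_{v\in V} t_v = O(n^3)$. Next I would apply the trivial averaging inequality: $\frac{1}{n}\sum_{v\in V} t_v \leq \max_{v\in V} t_v$, which immediately yields $\frac{1}{n}\sum_{v\in V} t_v = O(n^3)$. Since the algorithm is randomized, strictly speaking we should note that this bound on the worst-case round complexity holds deterministically (from the recurrence $T(k) \le 2T(k-1) + 3$ which is a pure structural bound on the recursion tree depth $K = \lceil 3\log n\rceil$), so the bound on the node-averaged round complexity also holds deterministically; in particular, it holds in expectation.

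There is essentially no obstacle here; the only small subtlety is to confirm that the $O(n^3)$ bound from Lemma~\ref{lemma-first-algorithm-worst-case-traditional-round-complexity} is a worst-case deterministic bound (not merely a high-probability bound), which it is because it follows purely from the fixed recursion depth $K$ and the structural recurrence for $T(k)$ regardless of the outcomes of the coin flips. Hence the corollary follows in one line.
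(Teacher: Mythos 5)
Your proposal is correct and matches the paper exactly: the paper states this lemma as an immediate corollary of Lemma~\ref{lemma-first-algorithm-worst-case-traditional-round-complexity}, i.e., the average round count is trivially bounded by the maximum, which is $O(n^3)$ deterministically. Your added remark that the worst-case bound is deterministic (following from the fixed recursion depth and the recurrence for $T(k)$) is a correct and harmless refinement.
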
}
\onlyShort{The analysis of the traditional (i.e., worst-case with regard to the individual nodes) round complexity of Algorithm \ref{alg:sleepingMIS} is deferred to the full version of the paper for the sake of preserving space here. Combining the results corresponding to the different complexity measures}\onlyLong{Combining Lemmas \ref{lemma-first-algorithm-node-averaged-awake-complexity}, \ref{lemma-first-algorithm-worst-case-awake-round-complexity}, \ref{lemma-first-algorithm-worst-case-traditional-round-complexity}, and \ref{lemma-first-algorithm-node-averaged-traditional-round-complexity}} yields the main result of this section.

\begin{theorem} \label{theorem-all-the-complexity-measures-for-the-first-algorithm}
    There is a randomized Monte Carlo,\footnote{We recall that a Monte Carlo randomized algorithm is one that may sometimes produce an incorrect solution. In contrast, Las Vegas algorithms are randomized algorithms that always produce the correct solution.\onlyLong{ Please refer to \cite[Section $1.2$]{Motwani_1995_Book} for a detailed discussion on these two classes of randomized algorithms.}} distributed MIS algorithm --- described in Algorithm \ref{alg:sleepingMIS} --- that is correct with high probability and has the following performance measures:
    \begin{itemize}
        \item $O(1)$ node-averaged awake complexity (on expectation),
        \item $O(\log{n})$ worst-case awake complexity (always),
        \item $O(n^3)$ node-averaged round complexity (always), and
        \item $O(n^3)$ worst-case round complexity (always).
    \end{itemize}
\end{theorem}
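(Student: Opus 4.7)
The plan is to assemble the theorem directly from the complexity lemmas already established for Algorithm~\ref{alg:sleepingMIS}, together with the correctness guarantee of Lemma~\ref{lem:correctness}. Each of the four performance bullets corresponds to one of those lemmas, so the theorem amounts to collecting them into a single statement under the common hypothesis that the algorithm is invoked with $K = \lceil 3\log n \rceil$.

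For the expected $O(1)$ node-averaged awake complexity I would simply cite Lemma~\ref{lemma-first-algorithm-node-averaged-awake-complexity}. The substance of that bound already lies in Lemma~\ref{lem:levelCost} combined with the Pruning Lemma (\ref{lem:right}) and Lemma~\ref{lem:left}: at each recursion level the expected population participating in calls shrinks by a factor of at least $3/4$, so the total awake work summed over all levels is bounded by a geometric series $\sum_i (3/4)^i n = O(n)$, and dividing by $n$ yields $O(1)$. For the $O(\log n)$ worst-case awake complexity I would invoke Lemma~\ref{lemma-first-algorithm-worst-case-awake-round-complexity}: any single node is awake for only $O(1)$ rounds at each level it participates in (the first isolated-node detection, the elimination/synchronization step, and the second isolated-node detection), and the recursion depth is exactly $K = \lceil 3\log n \rceil$.

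For the worst-case (traditional) round complexity I would invoke Lemma~\ref{lemma-first-algorithm-worst-case-traditional-round-complexity}, which solves the recurrence $T(k) \le 2T(k-1) + 3$ with $T(0)=0$ to give $T(k) \le 3(2^k-1)$; plugging in $k = \lceil 3\log n \rceil$ yields $O(n^3)$. Since the node-averaged round complexity is always dominated by the worst-case round complexity, the same $O(n^3)$ bound applies there as well, which is exactly Lemma~\ref{lemma-first-algorithm-node-averaged-traditional-round-complexity}. Correctness with high probability is the content of Lemma~\ref{lem:correctness}, which uses the depth $K = \lceil 3\log n \rceil$ together with a union bound to ensure that, w.h.p., each leaf of the recursion tree handles at most one node.

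Since every piece has already been proved, there is no genuine technical obstacle in this final assembly step; the main difficulty of the entire argument was already concentrated in the Pruning Lemma (\ref{lem:right}), whose coupling via the principle of deferred decisions drives the $3/4$-shrinkage per level that in turn yields the constant node-averaged awake bound.
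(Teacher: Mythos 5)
Your proposal matches the paper's own proof: Theorem~\ref{theorem-all-the-complexity-measures-for-the-first-algorithm} is obtained there exactly by combining Lemmas~\ref{lemma-first-algorithm-node-averaged-awake-complexity}, \ref{lemma-first-algorithm-worst-case-awake-round-complexity}, \ref{lemma-first-algorithm-worst-case-traditional-round-complexity}, and \ref{lemma-first-algorithm-node-averaged-traditional-round-complexity} together with the correctness guarantee of Lemma~\ref{lem:correctness}. Your summaries of how each constituent lemma is established (the geometric $3/4$-shrinkage from Lemmas~\ref{lem:left} and~\ref{lem:right} via Lemma~\ref{lem:levelCost}, the recursion-depth argument, and the recurrence $T(k) \le 2T(k-1)+3$) are also faithful to the paper.
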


\subsection{Improving the Worst-case Round Complexity} \label{sec:worst-case}

We now show how to reduce the worst-case round complexity significantly to polylogarithmic rounds while still keeping the node-averaged awake complexity to be constant and the worst-case awake complexity to be $O(\log{n})$ rounds. In particular, we show the following theorem, which is the main result of this section.

\begin{theorem} \label{theorem-all-the-complexity-measures-for-the-second-algorithm}
    There is a randomized, Monte-Carlo distributed MIS algorithm --- described in Algorithm \ref{algorithm-fast-sleeping} --- that is correct with high probability and  has the following performance measures:
    \begin{itemize}
        \item $O(1)$ node-averaged awake complexity (on expectation),
        \item $O(\log{n})$ worst-case awake complexity (with high probability),
        \item $O(\log^{3.41}{n})$ node-averaged round complexity (with high probability), and
        \item $O(\log^{3.41}{n})$ worst-case round complexity (with high probability).
    \end{itemize}
\end{theorem}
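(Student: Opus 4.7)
The plan is to combine Algorithm \ref{alg:sleepingMIS} with a variant of Luby's algorithm by truncating the recursion depth of \textsc{SleepingMIS} at $D = \Theta(\log \log n)$ and handing the still-unresolved nodes off to a Luby-style procedure that completes the MIS in $O(\log n)$ rounds. Let \textsc{Fast-SleepingMIS} denote the resulting two-phase algorithm. In Phase~1 each node invokes \textsc{SleepingMISRecursive} with parameter $K' = D$ (instead of $\lceil 3\log n\rceil$), and in Phase~2 the set $R$ of residual nodes (those whose \texttt{inMIS} flag is still \texttt{unknown} after Phase~1) runs Luby's MIS on the induced subgraph $G[R]$.

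First I would fix the handoff. Phase~1 terminates deterministically at a globally known round $T(D) = 3(2^{D}-1)$, by the recurrence from the proof of Lemma \ref{lemma-first-algorithm-worst-case-traditional-round-complexity}. Immediately before that round, each node whose \texttt{inMIS} is true broadcasts its membership to its neighbors so that every residual node can eliminate itself if it has a neighbor already in the MIS. All non-residual nodes then sleep permanently, and at round $T(D)+1$ the nodes in $R$ wake in synchrony and execute Luby's MIS on $G[R]$. The overall output is the union of the MIS computed by the (truncated) Phase~1 and the MIS produced by Phase~2.

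Next I would bound all four complexity measures by reusing the machinery of Section \ref{sec:analysis}. Lemma \ref{lem:levelCost} is stated uniformly in $i$, so the same bookkeeping as in Lemma \ref{lemma-first-algorithm-node-averaged-awake-complexity} yields $\E[C_{\text{Phase 1}}] \leq O(1)\cdot\sum_{i=0}^{D-1}(3/4)^i\, n = O(n)$. Applying Lemma \ref{lem:levelCost} at the truncated base level gives $\E[|R|] \le (3/4)^{D}\, n$, so the expected contribution of Phase~2 to the total awake work is at most $\E[|R|]\cdot O(\log n) = O((3/4)^{D}\, n\log n)$. Taking the constant in $D = c\log\log n$ sufficiently large makes this $O(n)$, so dividing by $n$ yields node-averaged awake complexity $O(1)$. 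For the worst-case awake complexity, each node is awake at most $O(D) = O(\log\log n)$ rounds in Phase~1 (by the argument of Lemma \ref{lemma-first-algorithm-worst-case-awake-round-complexity}) and at most $O(\log n)$ rounds in Phase~2 if it is residual, giving $O(\log n)$ w.h.p.\ overall. For the worst-case round complexity, Phase~1 is deterministically $O(2^D)$ rounds and Phase~2 is $O(\log n)$ rounds w.h.p.\ by Luby's analysis, giving $O(2^D + \log n) = O(\log^{3.41} n)$ w.h.p.\ for an appropriate choice of the constant in $D$; the same bound applies to the node-averaged round complexity since it is trivially dominated by the worst-case round complexity.

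The main obstacle will be verifying correctness of the handoff, i.e.\ that running Luby's on $G[R]$ in Phase~2 together with what was computed in Phase~1 yields a legitimate MIS of $G$. The inductive argument of Lemma \ref{lem:correctness} carries over to Phase~1 once the base case (which previously relied on the recursion reaching an isolated subgraph w.h.p.) is replaced by the statement that $R$ is precisely the set of nodes left for Phase~2. One must then check that (i) every non-residual node's status is correctly fixed at the end of Phase~1, which follows from the Phase~1 induction plus the final MIS-membership broadcast, and (ii) Phase~2 only needs to examine edges within $G[R]$, since every cross-edge from $R$ to $V\setminus R$ has been resolved by that broadcast. A secondary subtlety is tuning the constant in $D$ so as to simultaneously satisfy $2^D = O(\log^{3.41} n)$, $(3/4)^D\cdot\log n = O(1)$, and a high-probability (rather than merely in-expectation) bound on the Phase~2 round count; the last follows from the standard w.h.p.\ analysis of Luby's, which holds for \emph{any} input graph regardless of the realization of $|R|$.
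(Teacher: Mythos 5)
Your two-phase architecture --- truncate the recursion at depth $D = \Theta(\log\log n)$, collect all still-\texttt{unknown} nodes into a single residual set $R$, and run one global Luby instance on $G[R]$ --- is not the paper's approach, and it has a fatal gap: the bound $\E[|R|] \le (3/4)^{D} n$ that you invoke does not survive the truncation. The Pruning Lemma (Lemma~\ref{lem:right}) for a call at level $k$ is not a statement about coin flips alone; it relies (through Lemma~\ref{lem:connection} and Lemma~\ref{lem:maximalRank}) on the fact that the \emph{left recursive call actually computes an MIS} on its participants, so that a sequence-fixed node with $X_k=1$ joins the MIS before the synchronization step and thereby eliminates its neighbors. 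If the leaves of the truncated tree leave their nodes \texttt{unknown} (which is what ``$R$ is precisely the set of nodes left for Phase~2'' forces), then at the level just above the leaves no node ever sets \texttt{inMIS} to true, the elimination and second-isolated-node-detection steps decide nothing, and every non-isolated node that entered the left call re-enters the right call as well. The disjointness $L \cap R = \emptyset$ and the $3/4$ decay of Lemma~\ref{lem:levelCost} both collapse, and in the worst case a node visits a constant fraction of all $2^{D}$ leaves, so even Phase~1 alone costs far more than $O(1)$ awake rounds per node; Phase~2 then runs Luby on essentially the whole graph, giving $\Theta(\log n)$ node-averaged awake complexity.

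The missing idea is that the base cases cannot be batched and deferred: the participant set of each leaf depends on the MISes computed at the leaves preceding it in the depth-first traversal. The paper therefore solves each base case \emph{in place}, running the parallel randomized greedy MIS (for exactly $c\log n$ rounds, for synchronization) at every leaf of the truncated tree. The enabling observation is Corollary~\ref{cor:samemis}: both \textsc{SleepingMISRecursive} and randomized greedy compute the lexicographically-first MIS with respect to the ranks $r_k(\cdot)$, so substituting greedy at depth $\ell\log\log n$ reproduces exactly the execution of Algorithm~\ref{alg:sleepingMIS} at all higher levels, and Lemmas~\ref{lem:right} and~\ref{lem:levelCost} apply verbatim. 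This is also why the worst-case round complexity is $O(\log^{\ell+1} n) = O(\log^{3.41} n)$ rather than your claimed $O(2^{D} + \log n)$: the $(\log n)^{\ell}$ leaves each take $c\log n$ rounds and must be executed sequentially. (Your choice $D = \ell\log\log n$ with $\ell = (\log_2(4/3))^{-1}$ so that $(3/4)^{D}\log n = O(1)$ is the same calculation the paper makes; it is the handoff, not the parameter tuning, that is wrong.)
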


\onlyLong{In the rest of the section, we prove Theorem \ref{theorem-all-the-complexity-measures-for-the-second-algorithm}.}

\paragraph{Modifications to Algorithm \textsc{SleepingMISRecursive}($k$).}

The main idea is to truncate the recursion tree of Algorithm \textsc{SleepingMISRecursive}($k$) earlier (see Figure \ref{figure-recursion-tree-2}) and use the \emph{parallel/distributed randomized greedy Maximal Independent Set (MIS) algorithm} (described below) to solve the base cases.

\paragraph{The parallel/distributed randomized greedy Maximal Independent Set (MIS) algorithm \cite{Coppersmith_1989, Blelloch_2012, Fischer_2018}.}

The parallel/distributed randomized greedy MIS algorithm works as follows\cite{Fischer_2018}. An \emph{order} (also called \emph{ranking}) of the vertices is chosen uniformly at random. Then, in each round, all vertices having the highest rank among their respective neighbors are added to the independent set and removed from the graph along with their neighbors. This process continues iteratively until the resulting graph is empty. \onlyShort{Fischer and Noever \cite{Fischer_2018} showed that the parallel/distributed randomized greedy MIS algorithm finishes in $O(\log{n})$ rounds with high probability.}

\onlyLong{
The parallel/distributed randomized greedy MIS algorithm was first introduced by Coppersmith et al.\ \cite{Coppersmith_1989}. They used this algorithm to find an MIS for $G(n, p)$ random graphs and showed that it runs in $O(\log^2{n})$ \emph{expected} rounds and that this holds true for all values of $p  \in  [0, 1]$. Blelloch et al.\ \cite{Blelloch_2012} extended this result to general (arbitrary) graphs and showed an $O(\log^2{n})$ run-time with high probability. Finally, Fischer and Noever \cite{Fischer_2018} improved the analysis further and showed that the parallel/distributed randomized greedy MIS algorithm ran in $O(\log{n})$ rounds with high probability and also that this bound was tight. Fischer and Noever's work showed that the parallel/distributed randomized greedy MIS algorithm was (asymptotically) as fast as the famous algorithm by Luby \cite{Luby_1986, Alon_1986}, where the random ranking of the nodes is chosen afresh for each iteration.
}

It is a well-known fact (see, e.g., \cite{Blelloch_2012}) that the parallel/distributed randomized greedy MIS algorithm always produces what is known as the \emph{lexicographically first} MIS \cite{Coppersmith_1989}, i.e., the same MIS output by the sequential greedy algorithm. In other words, it always produces the same result once an ordering of the vertices is fixed. We observe that Algorithm \textsc{SleepingMISRecursive}($k$) produces a lexicographically first MIS as well. This follows as an immediate corollary to Lemma \ref{lem:maximalRank}. Formally,

\begin{corollary} \label{cor:samemis}
    Algorithm \textsc{SleepingMISRecursive}($k$) and the parallel/distributed randomized greedy MIS algorithm produce the same MIS.
\end{corollary}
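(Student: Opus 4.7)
The plan is to show that both algorithms, when coupled on a shared source of randomness, output the \emph{lexicographically first} MIS with respect to a common vertex ranking. First, I would use the random bits $v.X_K, v.X_{K-1}, \dots, v.X_1$ sampled inside \textsc{SleepingMIS} to define a total ordering of $V$ via the $K$-rank $r_K(\cdot)$ of Definition~\ref{definition-k-rank}; by the same argument used in Lemma~\ref{lem:distinctRanks} applied with $k = K$ together with a union bound over pairs of vertices, these $K$-ranks are pairwise distinct with high probability, yielding a uniformly random total order on $V$. The parallel/distributed randomized greedy MIS algorithm is then coupled to run with exactly this ranking.

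Next, I would invoke the standard characterization, recalled just before the corollary, that the parallel/distributed randomized greedy MIS algorithm outputs the unique ``lexicographically first'' MIS $M^{*}$ with respect to its input ranking, i.e.\ the set of vertices uniquely determined by the rule
\[
    v \in M^{*} \iff \text{every } w \in N_G(v) \text{ with } r_K(w) > r_K(v) \text{ satisfies } w \notin M^{*}.
\]
It therefore suffices to show that the set $M$ returned by the top-level call \textsc{SleepingMISRecursive}($K$) on $U = V$ equals $M^{*}$.

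I would prove $M = M^{*}$ by strong induction on the nodes processed in \emph{decreasing} order of $K$-rank. For the inductive step on a vertex $v$, assume $u \in M \Leftrightarrow u \in M^{*}$ for all $u$ with $r_K(u) > r_K(v)$. If $v \in M^{*}$, then every neighbor $w$ of $v$ with $r_K(w) > r_K(v)$ lies outside $M^{*}$, and therefore outside $M$ by the induction hypothesis; that is, every such $w$ eventually sets $w.\text{inMIS}$ to false, so Lemma~\ref{lem:maximalRank} applied at the top level (with $U = V$ and $k = K$) forces $v \in M$. Conversely, if $v \notin M^{*}$, some higher-ranked neighbor $w$ of $v$ lies in $M^{*}$, hence in $M$ by the induction hypothesis, and independence of $M$ (Lemma~\ref{lem:correctness}) rules out $v \in M$.

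The only delicate point is that the coupling requires the $K$-ranks to define a genuine uniformly random total order; this is handled by the high-probability distinctness of $K$-ranks noted above. Given that, everything else is a direct rewrite of Lemma~\ref{lem:maximalRank} into a uniqueness statement, which is why the corollary is advertised as immediate.
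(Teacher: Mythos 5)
Your proof is correct and follows essentially the same route the paper intends: the paper derives the corollary directly from Lemma~\ref{lem:maximalRank} by observing that both algorithms output the lexicographically first MIS with respect to the (with high probability distinct) random ranking, which is exactly the coupling and downward rank induction you spell out. You merely make explicit the details the paper leaves implicit --- distinctness of the $K$-ranks, the fixed-point characterization of the greedy output, and the appeal to Lemma~\ref{lem:correctness} for the converse direction.
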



\onlyLong{\begin{figure*}[t]
\begin{center}


\begin{tikzpicture}[level/.style = {sibling distance = 25 mm / #1, level distance = 10mm}]

\node[circle, draw] (z){}
	child {node [circle, draw] (y){}
		child {node [circle, draw] (p) {}}
		child {node [circle, draw] (q) {}}
		}
	child {node [circle,draw] (x){}
		child {node [circle, draw] (r) {}}
		child {node [circle, draw] (s) {}}
		};

\node[] (p2k) [below left = of p, xshift=8mm, yshift = 2 mm] {\textbf{$\vdots$}};
\node[] (s2n) [below right = of s, xshift=-8mm, yshift = 2 mm] {\textbf{$\vdots$}};
\node[circle, draw] (k) [below left = of p2k, xshift=8mm, yshift = 2 mm] {};
\node[] (k2l) [right = of k, xshift=-8mm] {$\cdots$}; 
\node[circle, draw] (l) [right = of k2l, xshift=-7mm] {}; 
\node[] (l2m) [right = of l, xshift=-8mm] {$\cdots$};
\node[circle, draw] (m) [right = of l2m, xshift=-7mm] {};
\node[] (m2n) [right = of m, xshift=-8mm] {$\cdots$};
\node[circle, draw] (n) [below right = of s2n, xshift=-8mm, yshift = 2 mm] {}; 
\draw [dashed] (p) -- (p2k) -- (k);
\draw [dashed] (s) -- (s2n) -- (n);
\node[] (q2) [below right = of q, xshift=-8mm, yshift = 2 mm] {};
\node[] (r2) [below left = of r, xshift=8mm, yshift = 2 mm] {};
\draw [dashed] (q) -- (q2);
\draw [dashed] (r) -- (r2);

\node[] (k2a) [below left = of k, xshift=10mm, yshift = 2 mm] {\textbf{$\vdots$}};
\node[] (n2f) [below right = of n, xshift=-10mm, yshift = 2 mm] {\textbf{$\vdots$}};
\node[circle, draw] (a) [below left = of k2a, xshift=10mm, yshift = 2 mm] {};
\node[circle, draw] (b) [right = of a, xshift=-8mm] {};
\node[] (b2c) [right = of b, xshift=-8mm] {$\cdots$};
\node[circle, draw] (c) [right = of b2c, xshift=-6mm] {};
\node[circle, draw] (d) [right = of c, xshift=-8mm] {};
\node[circle, draw] (e) [right = of d, xshift=-8mm] {};
\node[] (e2f1) [right = of e, xshift=-6mm] {$\cdots$};
\node[] (e2f2) [right = of e2f1, xshift=-6mm] {$\cdots$};
\node[circle, draw] (f) [below right = of n2f, xshift=-8mm, yshift = 2 mm] {}; 
\draw [dashed] (k) -- (k2a) -- (a);
\draw [dashed] (n) -- (n2f) -- (f);

\node[] (a2g) [below left = of a, xshift=10mm, yshift = 2 mm] {\textbf{$\vdots$}};
\node[] (f2j) [below right = of f, xshift=-10mm, yshift = 2 mm] {\textbf{$\vdots$}};
\node[circle, draw, dashed] (g) [below left = of a2g, xshift=8mm, yshift = 2 mm] {};
\node[] (g2h) [right = of g, xshift=-3mm] {$\cdots$};
\node[circle, draw, dashed] (h) [right = of g2h] {};
\node[] (h2i) [right = of h, xshift=-1mm] {$\cdots$};
\node[circle, draw, dashed] (i) [right = of h2i, xshift=5mm] {};
\node[] (i2j) [right = of i] {$\cdots$};
\node[circle, draw, dashed] (j) [below right = of f2j, xshift=-8mm, yshift = 2 mm] {};
\draw [dashed] (a) -- (a2g) -- (g);
\draw [dashed] (f) -- (f2j) -- (j);


\node [text width=4cm, align=right] (label3) [right = of j, xshift=-16mm] {$\text{level} = 0$, \\$\text{depth} = K = c\log{n}$ \\(the base case for Algorithm \ref{alg:sleepingMIS})};
\node [text width=4cm, align=right] (label2) [right = of f, xshift=-9mm] {$\text{level} = K - \ell\log{\log{n}}$, \\$\text{depth} = \ell\log{\log{n}}$ \\(the base case for Algorithm \ref{algorithm-fast-sleeping})};
\draw [dashed] (f) -- (label2);
\node [text width=4cm, align=right] (label1) [right = of n, xshift=-2mm] {$\text{level} = i$, $\text{depth} = K - i$};
\draw [dashed] (n) -- (label1);
\node [text width=4cm, align=right] (label0) [right = of z, xshift=25mm] {$\text{level} = K = c\log{n}$, \\ $\text{depth} = 0$};
\draw [dashed] (z) -- (label0);
\node [] (source) [above left = of z, xshift=20mm, yshift=-8mm] {Execution begins at the root};

\end{tikzpicture}


\end{center}
\vspace{5 mm}
\caption{\large \boldmath The recursion tree for our algorithms (Algorithms \ref{alg:sleepingMIS} and \ref{algorithm-fast-sleeping}, respectively)}
\label{figure-recursion-tree-2}
\end{figure*}}

\onlyShort{\input{figure_tree_2_SHORT}}


This means that at any given level $i$, the recursion tree corresponding to Algorithm \textsc{SleepingMISRecursive}($k$) (see Figure \ref{figure-recursion-tree-2}) would look exactly the same --- no matter which of Algorithm \textsc{SleepingMISRecursive}($k$) or the parallel/distributed randomized greedy MIS algorithm was used to solve the subsequently lower levels. In particular, Lemma \ref{lem:levelCost} would still hold if the $i^{\text{th}}$ level (for any $i  \in  [1, K]$) of the recursion tree was considered as the base case (instead of the $0^{\text{th}}$ level as in Algorithm \textsc{SleepingMISRecursive}($k$)) and the parallel/distributed randomized greedy MIS algorithm was used to solve each of the base cases. We note
that for synchronization  at higher levels of the recursion, we will require 
    that the greedy algorithm runs for (exactly) $c\log n$ rounds for some large (but fixed) constant $c > 0$. The constant $c$ is chosen such
    that the greedy algorithm finishes with high probability and the existence
    of this constant follows from the analysis of Fischer and Noever \cite{Fischer_2018}. However, this makes the algorithm Monte-Carlo,
    since there is a small probability that some base case might not finish within
    the required time which might affect the correctness of the overall algorithm.

We are now ready to present our modified algorithm which we call \textsc{Fast-SleepingMISRecursive}($k$). The recursion tree for this algorithm is shown in Figure \ref{figure-recursion-tree-2}. Note that we slightly modify the parallel/distributed randomized greedy MIS algorithm as follows: 


\onlyShort{\input{pseudocode_SHORT_second_algorithm}}

\onlyLong{\begin{algorithm}
\begin{algorithmic}[1]

\Function{SleepingMIS}{$\ell  \cdot  \log{\log{n}}$} \Comment{$\ell$ is defined as $\ell  \defeq  (\log_2{(\frac{4}{3})})^{-1}$; see Equation \ref{equation-ell}}
      \State $v.$inMIS $\gets$ \texttt{unknown}
      \For{$i$ \textbf{in} $1, \dots, \ell  \cdot  \log{\log{n}}$}
        \State $v.X_i \gets$ random bit that is $1$ with probability $\frac{1}{2}$
      \EndFor
      \State \textsc{SleepingMISRecursive}($\ell  \cdot  \log{\log{n}}$) \label{line-depth-of-recursion} \Comment{$\ell  \cdot  \log{\log{n}}$ is the recursion depth.}
\EndFunction

    \bigskip

\Function{Fast-SleepingMISRecursive}{$k$}
    \If{$k = 0$} \Comment{base case} \label{lin:baseCaseBegin}
        \State \Call{DistributedGreedyMIS}{} \Comment{Use the parallel/distributed randomized greedy MIS algorithm (see \cite[Section $2$]{Coppersmith_1989}) to solve the base cases.} 
        (For synchronization at higher levels of the recursion, we require
    that the greedy algorithm  runs for (exactly) $c\log n$ rounds for some large (but fixed) constant $c > 0$.)
        \State \textbf{return}
      \EndIf \label{lin:baseCaseEnd}

      \medskip
      \State send message to every neighbor \Comment{first isolated node detection, 1 round} \label{lin:isolatedBegin}
      \If{$v$ receives no message}
        \State $v.$inMIS $\leftarrow$ true
      \EndIf \label{lin:isolatedEnd}
       
      \medskip
      \If{$v.$inMIS $=$ unkown \textbf{and} $v.X_k = 1$} \Comment{left recursion} \label{lin:leftRecursionBegin}
        \State \textsc{SleepingMISRecursive}$(k - 1)$ \label{lin:leftRecursiveCall}
      \Else
        \State sleep for a suitable number of rounds to achieve synchronization \label{lin:leftSleep}
      \EndIf \label{lin:leftRecursionEnd}

      \medskip
      \State send value of $v.$inMIS to every neighbor  \Comment{synchronization step,  1 round} \label{lin:eliminationBegin}
      \If{$v.$inMIS $=$ unknown \textbf{and} $v$ receives message from neighbor $w$ with $w$.inMIS $=$ true}
        \State $v.$inMIS $\leftarrow$ false \Comment{elimination}
      \EndIf \label{lin:eliminationEnd}

      \medskip
      \State send value of $v.$inMIS to every neighbor \Comment{second isolated node detection, 1 round} \label{lin:isolated2Begin}
      \If{$v.$inMIS $=$ unknown \textbf{and} $v$ only receives messages from neighbors $w$ with $w.\text{inMIS} = \text{false}$}
        \State $v.$inMIS $\leftarrow$ true
      \EndIf \label{lin:isolated2End}

      \medskip
      \If{$v.$inMIS $=$ unknown} \Comment{right recursion} \label{lin:rightRecursionBegin}
        \State \textsc{SleepingMISRecursive}$(k - 1)$ \label{lin:rightRecursiveCall}
      \Else
        \State sleep for a suitable number of rounds to achieve synchronization
        \label{lin:rightSleep}
      \EndIf \label{lin:rightRecursionEnd}
    \EndFunction
\end{algorithmic}
\caption{The ``Fast Sleeping MIS algorithm'' executed by a node $v$} \label{algorithm-fast-sleeping}
\end{algorithm}}


Let's call the main algorithm that invokes \textsc{Fast-SleepingMISRecursive} as \textsc{Fast-SleepingMIS} (analogous to \textsc{SleepingMIS}) and it is now invoked with $\ell  \cdot  \log{\log{n}}$, where \onlyShort{$\ell$ is defined as in Equation \ref{equation-ell}.}

\onlyLong{
    \begin{equation} \label{equation-ell}
        \ell  \defeq  (\log_2{(\frac{4}{3})})^{-1}
    \end{equation}
}

Using Corollary \ref{cor:samemis}, the correctness of this algorithm (\textsc{Fast-SleepingMIS}) can be argued in a similar way as was done in the case of Algorithm \textsc{SleepingMIS}\onlyShort{.} \onlyLong{(see Section \ref{sec:correctness}).} \onlyShort{The full analysis of the time complexity is deferred to the longer version of the paper (in the appendix) --- for the sake of preserving space here. The main idea is to analyze the recursion tree (see Figure \ref{figure-recursion-tree-2}) to prove the various clauses of Theorem \ref{theorem-all-the-complexity-measures-for-the-second-algorithm}.}
We now proceed to analyze the time complexity of Algorithm \ref{algorithm-fast-sleeping}.

\begin{lemma} \label{lemma-algorithm-2-node-average-awake-complexity}
    The expected node-average awake complexity of \textsc{Fast-SleepingMIS}($K$) with $K = \ell  \cdot  \log{\log{n}}$ (where $\ell$ is defined as in Equation \ref{equation-ell})  in the sleeping model is $O(1)$.
\end{lemma}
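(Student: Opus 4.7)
The plan is to mirror the analysis of Lemma \ref{lemma-first-algorithm-node-averaged-awake-complexity}, adapted to the two changes in Algorithm \ref{algorithm-fast-sleeping}: the recursion is truncated at depth $K = \ell \cdot \log\log n$ with $\ell \defeq (\log_2(4/3))^{-1}$, and each base-case invocation runs the parallel randomized greedy MIS for exactly $c\log n$ rounds. The key observation I would invoke is that Lemmas \ref{lem:left} and \ref{lem:right}, and hence Lemma \ref{lem:levelCost}, only concern the random participation process at recursion levels $k \geq 1$ and are agnostic to how the base cases are ultimately resolved. By Corollary \ref{cor:samemis} the greedy base case produces the same (lexicographically first) MIS that the full recursion would have produced, so the participation sets $L_j, R_j$ at the higher levels remain distributed exactly as in the unmodified analysis; consequently the bound $\E[Z_{K-i}] \le (3/4)^i n$ carries over without modification to Algorithm \ref{algorithm-fast-sleeping}.

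First I would decompose the total awake cost $C = \sum_{v \in V} C(v)$ into an $O(1)$-round initialization, the contribution from the $K$ recursive layers, and the contribution from the base-case layer. At every level $k \geq 1$, each participating node is awake for only $O(1)$ rounds outside of its recursive calls (the two isolated-node-detection rounds and the elimination/synchronization round), so the recursive layers collectively contribute an expected cost of $O(1) \cdot \sum_{k=1}^{K} \E[Z_k]$. Applying Lemma \ref{lem:levelCost} and summing the geometric series yields $O(n) \cdot \sum_{i=0}^{\infty} (3/4)^i = O(n)$.

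Next I would bound the base-case contribution. Each node participating in a base case is awake for exactly $c\log n = O(\log n)$ rounds running the greedy algorithm, so its total expected contribution is $O(\log n) \cdot \E[Z_0]$. The specific choice of $\ell$ enters here: Lemma \ref{lem:levelCost} with $i = K$ gives $\E[Z_0] \le (3/4)^K \cdot n$, and substituting $K = \ell \log\log n$ yields
\[
    (3/4)^K \;=\; 2^{K \log_2(3/4)} \;=\; 2^{-\log\log n} \;=\; 1/\log n,
\]
by the definition of $\ell$. Hence $\E[Z_0] \le n/\log n$ and the base-case contribution is $O(\log n) \cdot n/\log n = O(n)$. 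Combining the two contributions gives $\E[C] = O(n)$, so the expected node-average awake complexity is $\E[C]/n = O(1)$.

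The main subtlety will be to justify carefully that truncating the recursion does not invalidate the pruning argument. Lemma \ref{lem:right} is proved via Lemma \ref{lem:connection}, which in turn relies on inductive correctness of the algorithm on the subgraphs reached at lower recursion levels. I would verify that the greedy base case satisfies the guarantee needed there, namely that within each base-case subgraph every node ultimately settles $v.\text{inMIS}$ consistently with the lexicographically first MIS. This follows from Corollary \ref{cor:samemis} together with Fischer and Noever's \cite{Fischer_2018} high-probability bound that the greedy procedure terminates within $c\log n$ rounds; it is also where the Monte Carlo nature of Algorithm \ref{algorithm-fast-sleeping} enters, since the failure probability of some base case must be bounded by a union argument to preserve correctness whp.
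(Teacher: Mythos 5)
Your proposal is correct and follows essentially the same route as the paper: both decompose the awake cost into the $O(1)$-per-participation recursive layers (bounded via Lemma \ref{lem:levelCost}, which carries over to the truncated recursion by Corollary \ref{cor:samemis}) and the base-case layer, where the choice of $\ell$ gives $(3/4)^{K} = 1/\log n$ so that the expected $n/\log n$ base-case participants each awake for $O(\log n)$ rounds contribute only $O(n)$ total. Your added care about why truncation does not invalidate the pruning argument matches the paper's surrounding discussion rather than its (sketchier) proof, but it is the same argument.
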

\begin{proof}[Proof Sketch]
    As in the proof of Lemma \ref{lemma-first-algorithm-node-averaged-awake-complexity}, we argue that the cost of recursion is $O(1)$ per vertex, on expectation. In addition to the recursive calls, since the base cases here are non-trivial, each of them takes substantially more time, so we need to account for those too.
    
    As has been argued before, Lemma \ref{lem:levelCost} guarantees that the total (expected) number of network nodes participating at depth $\ell  \cdot  \log{\log{n}}$
    \begin{align*}
        &=   (\frac{3}{4})^{\ell  \cdot  \log{\log{n}}}  \cdot  n\\
        &=   (\frac{1}{2})^{\log{\log{n}}}  \cdot  n  \tag{substituting the value of $\ell$ from Equation \ref{equation-ell}}\\
        &=   \frac{n}{\log{n}}
    \end{align*}
    
    By virtue of \cite[Theorem $1.1$]{Fischer_2018}, the parallel/distributed randomized greedy MIS algorithm takes --- with high probability --- $O(\log{n})$ rounds to solve each instance of these base cases, i.e., recursion tree-nodes at depth $\ell  \cdot  \log{\log{n}}$. Thus each node participating at this level (depth $\ell  \cdot  \log{\log{n}}$ of the recursion tree) stays awake for $O(\log{n})$ time.
    
    Thus the total (total over all nodes) \emph{expected} awake time at the base level (i.e., at depth $\ell  \cdot  \log{\log{n}}$ of the recursion tree) is equal to
    \begin{align*}
        &=   \text{number of network nodes participating at that level}\\
        &\times   \text{awake-time of each of those participating nodes}\\
        &=   \frac{n}{\log{n}}  \times  O(\log{n})   =   O(n)\text{.}
    \end{align*}
    
    Dividing by $n$ we get the additional awake complexity per node and that turns out to be $\frac{O(n)}{n}  =  O(1)$.
    
    Thus the total \emph{expected} node-average awake complexity of \textsc{Fast-SleepingMIS}($k$) is
        \begin{align*}
            &=   \text{node-average awake complexity contributed by the recursion calls}\\
            &+  \text{node average awake complexity contributed by the solutions to the base cases}\\
            &=  O(1) + O(1)  =  O(1)\text{.}
        \end{align*}
\end{proof}

\begin{lemma} \label{lemma-algorithm-2-worst-case-round-complexity}
    It holds with high probability that the worst-case round complexity of \textsc{Fast-SleepingMIS}($K$) with
    \begin{center}
        $K = \ell  \cdot  \log{\log{n}}$
    \end{center}    
    is
    \begin{center}
        $O(\log^{\ell + 1}{(n)})  =  O(\log^{3.41}{(n)})$.
    \end{center}
\end{lemma}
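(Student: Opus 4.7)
The plan is to set up a simple recurrence on the worst-case round complexity of \textsc{Fast-SleepingMISRecursive}$(k)$ exactly as in the proof of Lemma~\ref{lemma-first-algorithm-worst-case-traditional-round-complexity}, but with a non-trivial base case. Let $T(k)$ denote the worst-case number of rounds taken by a call of \textsc{Fast-SleepingMISRecursive}$(k)$. Outside the recursive calls, the function performs only a constant number of communication rounds (the first isolated node detection, the synchronization/elimination step, and the second isolated node detection), and it invokes itself recursively at most twice. Thus
\begin{equation*}
T(k) \le 2\,T(k-1) + O(1) \quad \text{for } k > 0.
\end{equation*}

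For the base case $k = 0$, the algorithm runs the parallel/distributed randomized greedy MIS algorithm. By Fischer and Noever~\cite{Fischer_2018}, there is a fixed constant $c > 0$ such that on any $n$-node graph the greedy algorithm terminates within $c \log n$ rounds with probability at least $1 - n^{-\gamma}$ for any desired constant $\gamma > 1$. Recall that, for the sake of synchronization, the algorithm always runs the greedy routine for \emph{exactly} $c \log n$ rounds at every base-case leaf. Hence unconditionally $T(0) = c \log n = O(\log n)$.

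Unrolling the recurrence yields
\begin{equation*}
T(K) \le 2^K \cdot T(0) + O(2^K) = O\!\left(2^K \cdot \log n\right).
\end{equation*}
Substituting $K = \ell \cdot \log\log n$ with $\ell = (\log_2(4/3))^{-1}$ gives
\begin{equation*}
2^K = 2^{\ell \log\log n} = (\log n)^{\ell},
\end{equation*}
and therefore $T(K) = O\!\left(\log^{\ell+1} n\right) = O(\log^{3.41} n)$, which is the claimed worst-case bound on the total number of rounds (awake plus sleeping) from the start of the algorithm to its completion.

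The only probabilistic ingredient is that every invocation of the greedy MIS at a recursion leaf terminates within its allotted $c \log n$ rounds; this is what makes the overall algorithm correct and ensures that the recurrence's base-case assumption $T(0) = c \log n$ suffices. The number of leaves in the recursion tree is at most $2^K = (\log n)^{\ell} = \mathrm{polylog}(n)$, and each leaf operates on a subgraph of $G$ of size at most $n$, so the $1 - n^{-\gamma}$ guarantee of~\cite{Fischer_2018} applies to each. A union bound over these $\mathrm{polylog}(n)$ leaves, with $\gamma$ chosen sufficiently large, shows that all base cases finish on time with high probability, which is the only step where ``with high probability'' enters the statement. This union bound is the one place that requires a brief argument; the rest is just the straightforward recurrence and the exponent arithmetic $\ell + 1 = 1 + 1/\log_2(4/3) \approx 3.41$.
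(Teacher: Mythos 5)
Your proof is correct and follows essentially the same route as the paper: unrolling the recurrence $T(K) \le 2^{K}T(0) + O(2^{K})$ with $T(0) = c\log n$ is just a repackaging of the paper's count of ``$(\log n)^{\ell}$ leaves, each branch costing $O(\log\log n) + O(\log n)$ rounds.'' Your explicit union bound over the $\mathrm{polylog}(n)$ leaves is in fact slightly more careful than the paper's treatment of the ``with high probability'' qualifier, but it is the same argument in substance.
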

\begin{proof}
    First we observe that the number of leaves of the recursion tree  =  recursion tree-nodes at depth $\ell  \cdot  \log{\log{n}}$ is $2^{\ell  \cdot  \log{\log{n}}}  =  (\log{n})^{\ell}$.
    
    Next we compute the total time taken for executing one branch of the recursion tree --- from the root to the leaf. That is given by
    \begin{align*}
        &   \text{time taken to traverse one branch of the recursion tree from the root to the leaf}\\
        &+   \text{time taken to solve the leaf, i.e., the base case of the recursion}\\
        &=   \text{depth of the recursion tree}  \times  O(1)   +   O(\log{n}) \tag{by \cite[Theorem $1.1$]{Fischer_2018}}\\
    \end{align*}
    
    The first term of the summation above equals $\ell  \cdot  \log{\log{n}}  \times  O(1)   =   O(\log{\log{n}})$ and this holds deterministically. The second term of the summation above equals $O(\log{n})$ and this holds with high probability (thanks to \cite[Theorem $1.1$]{Fischer_2018}).
    
    Substituting the values in the above equation, we get that the total time taken for executing one branch of the recursion tree --- from the root to the leaf --- is $O(\log{\log{n}})  +  O(\log{n})   =   O(\log{n})$, and this fact holds with high probability.
    
    Hence it holds with high probability that the total time taken for executing the entire recursion tree, i.e., for executing the algorithm
    \begin{align*}
        &=   \text{time taken for executing one branch of the recursion tree}   \times   \text{number of leaves of the recursion tree}\\
        &=   O(\log{n})  \times  (\log{n})^{\ell}   =   O(\log^{\ell + 1}{(n)})\\
        &=   O(\log^{3.41}{(n)})\text{.} \tag{substituting the value of $\ell$ from Equation \ref{equation-ell}}
    \end{align*}
\end{proof}

As an immediate corollary to Lemma \ref{lemma-algorithm-2-worst-case-round-complexity}, we have
\begin{lemma} \label{lemma-algorithm-2-node-averaged-round-complexity}
    It holds with high probability that the node-averaged round complexity of \textsc{Fast-SleepingMIS}($K$) with $K = \ell  \cdot  \log{\log{n}}$ is $O(\log^{\ell + 1}{(n)})  =  O(\log^{3.41}{(n)})$.
\end{lemma}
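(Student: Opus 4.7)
The plan is to observe that this lemma follows directly from Lemma~\ref{lemma-algorithm-2-worst-case-round-complexity} by the elementary fact that an average is bounded above by a maximum. Specifically, recall that in the (traditional) round complexity measure, a node $v$'s running time $r_v$ counts \emph{every} round from the start of the algorithm until $v$ terminates, including rounds spent in the sleeping state. Hence for every $v \in V$ we trivially have $r_v \leq \max_{u \in V} r_u$, and averaging gives
\begin{equation*}
    \frac{1}{n}\sum_{v \in V} r_v \;\leq\; \max_{u \in V} r_u.
\end{equation*}

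Thus I would invoke Lemma~\ref{lemma-algorithm-2-worst-case-round-complexity} to conclude that, with high probability, $\max_{u \in V} r_u = O(\log^{\ell+1} n) = O(\log^{3.41} n)$, and then use the inequality above to transfer the same bound to the node-averaged round complexity. Since the same high-probability event drives both bounds, no additional union bound or independent argument is needed; the implication is literally one line.

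There is no real obstacle here: the only thing to be careful about is not confusing the \emph{node-averaged round complexity} (where sleeping rounds count) with the \emph{node-averaged awake complexity} (where they do not), because the $O(1)$ bound of Lemma~\ref{lemma-algorithm-2-node-average-awake-complexity} would not be enough to derive this statement. Once that distinction is kept in mind, the proof is just the one-line corollary suggested by the authors' phrasing ``as an immediate corollary.''
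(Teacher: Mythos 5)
Your proof is correct and matches the paper's own reasoning: the paper states this lemma simply ``as an immediate corollary'' to Lemma~\ref{lemma-algorithm-2-worst-case-round-complexity}, which is precisely your average-bounded-by-maximum argument. Your added caveat distinguishing node-averaged round complexity from node-averaged awake complexity is accurate and the right thing to be careful about.
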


Next we argue about the worst-case awake round complexity of \textsc{Fast-SleepingMIS}($K$), with  $K = \ell  \cdot  \log{\log{n}}$. As argued  earlier, the worst-case awake round complexity is proportional to the depth of the recursion tree of the Algorithm \textsc{Fast-SleepingMISRecursive}($K$) plus the worst-case time taken by a node at the last level. The latter is $O(\log{n})$ rounds w.h.p.\ \cite[Theorem $1.1$]{Fischer_2018}, since they execute the randomized greedy algorithm with random ranks. Thus the overall worst-case awake round complexity is the depth of the recursion which is $O(\log{\log{n}})$ plus $O(\log{n})$ for the last level. Hence we can state the following.

\begin{lemma} \label{lemma-algorithm-2-worst-case-awake-complexity}
    It holds with high probability that the worst-case awake round complexity of \textsc{Fast-SleepingMIS}($K$) with $K = \ell  \cdot  \log{\log{n}}$ is $O(\log{n})$.
\end{lemma}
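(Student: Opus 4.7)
}

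The plan is to track the awake rounds experienced by an arbitrary single node $v$ throughout the execution of \textsc{Fast-SleepingMIS}($K$) with $K = \ell \cdot \log\log n$, and then take the worst case. The key structural observation I would exploit is that, by the design of \textsc{Fast-SleepingMISRecursive}, a node $v$ follows at most a single root-to-leaf path in the recursion tree: at each internal recursive call in which $v$ participates, $v$ either recurses into the left subtree (if $v.X_k = 1$), sleeps during the left call and then (possibly) recurses into the right subtree, or terminates after the first/second isolated-node detection or the elimination step. So the contribution of the purely recursive part of the algorithm to $v$'s awake count is bounded by the number of internal tree-nodes on the unique path traced by $v$ from the root.

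First, I would bound the awake cost incurred by $v$ along its path. At each level of the recursion tree, the work that $v$ actually performs while awake (outside of descending into a recursive subcall) consists of the first isolated-node detection, the synchronization/elimination step, and the second isolated-node detection, each of which takes $O(1)$ rounds. Since the recursion depth is $\ell \cdot \log\log n$ deterministically (by construction of Algorithm~\ref{algorithm-fast-sleeping}), the total awake cost for $v$ accumulated over all internal recursive levels is $O(\log\log n)$, and this bound holds deterministically.

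Second, I would bound the awake cost at the leaf. If $v$ reaches a leaf of the recursion tree without having terminated earlier, it participates in one invocation of the parallel/distributed randomized greedy MIS algorithm on the induced subgraph of the nodes remaining at that leaf. By Fischer and Noever~\cite{Fischer_2018} (the same result invoked in Lemma~\ref{lemma-algorithm-2-worst-case-round-complexity}), this procedure terminates in $O(\log n)$ rounds with high probability, and in fact we run it for the fixed budget of $c \log n$ rounds for a sufficiently large constant $c$ precisely to guarantee completion w.h.p. A union bound over the at most $n$ nodes that could reach a base case preserves the high-probability guarantee, since each individual leaf-instance already holds w.h.p.\ in $n$.

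Combining the two contributions, the number of awake rounds for any node $v$ is $O(\log\log n) + O(\log n) = O(\log n)$ with high probability, which proves the claim. I do not expect any real obstacle: the essential point is the one-path-per-node structure of the recursion (which is immediate from the algorithm and was already implicitly used in Lemma~\ref{lemma-first-algorithm-worst-case-awake-round-complexity}); the only probabilistic ingredient is the Fischer--Noever $O(\log n)$-rounds w.h.p.\ bound for the randomized greedy MIS used as the base case, and that is imported as a black box.
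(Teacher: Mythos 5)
Your proposal is correct and follows essentially the same route as the paper: bound the awake rounds of a single node by $O(1)$ per level of the recursion tree times the depth $\ell\cdot\log\log n$, plus the $O(\log n)$ w.h.p.\ cost of the randomized greedy MIS at the leaf, giving $O(\log\log n)+O(\log n)=O(\log n)$. The paper states this more tersely (referring back to the argument for Lemma~\ref{lemma-first-algorithm-worst-case-awake-round-complexity}), while you make the one-path-per-node observation and the union bound over base-case instances explicit, which is a harmless elaboration rather than a different argument.
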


Combining Lemmas \ref{lemma-algorithm-2-node-average-awake-complexity}, \ref{lemma-algorithm-2-worst-case-round-complexity}, \ref{lemma-algorithm-2-worst-case-awake-complexity}, and \ref{lemma-algorithm-2-node-averaged-round-complexity} immediately gives rise to Theorem \ref{theorem-all-the-complexity-measures-for-the-second-algorithm}.

\section{Conclusion} \label{sec:conc}

In this paper, we posit the sleeping model, and show that the fundamental MIS problem on general graphs can be solved in $O(1)$ rounds under node-averaged awake complexity, i.e., the average number of awake (non-sleeping) rounds taken by all nodes is $O(1)$. This is the first such result that we are aware of where we can obtain a constant round bound for node-averaged complexity for MIS \emph{either} in the sleeping model or in the traditional model. An important open question is whether a similar bound can be shown in the traditional model; note that such a bound can be shown for $(\Delta+1)$-coloring for general graphs (see Section \ref{sec:related}). The sleeping model which is an extension of the traditional model can prove useful in designing distributed algorithms for various problems that are efficient with respect to the node-averaged measure; that would have crucial implications for saving nodes' resources (energy, etc.) in resource-constrained networks.


\section*{Acknowledgement}

We thank the anonymous reviewers at PODC 2020 for the many insightful comments.


\bibliographystyle{plain}
\bibliography{main}

\end{document}